\definecolor{rouge}{rgb}{1,0,0}
\definecolor{bleu}{rgb}{0,0,1}
\definecolor{vert}{rgb}{0,0.5,0}
\begin{document}
\title[Acoustic / porous wave propagation]{Wave propagation across acoustic / Biot's media: a finite-difference method}
%Time domain numerical modeling of wave propagation in 2D acoustic / porous media

\author[Chiavassa G. and Lombard B.]
{Guillaume Chiavassa\affil{1}\comma\corrauth and 
Bruno Lombard\affil{2}}
\address{
\affilnum{1}\ Centrale Marseille, Laboratoire de M\'ecanique Mod\'elisation et Proc\'ed\'es Propres, UMR 6181 CNRS,
Technop\^ole de Chateau-Gombert, 38 rue Fr\'ed\'eric Joliot-Curie, 13451 Marseille, France. \\
\affilnum{2}\ Laboratoire de M\'{e}canique et d'Acoustique, UPR 7051 CNRS, 31 chemin Joseph Aiguier, 13402 Marseille, France.}
\emails{
{\tt guillaume.chiavassa@centrale-marseille.fr} (G.~Chiavassa), 
{\tt lombard@lma.cnrs-mrs.fr} (B.~Lombard)}

\begin{abstract}
Numerical methods are developed to simulate the wave propagation in heterogeneous 2D fluid / poroelastic media. Wave propagation is described by the usual acoustics equations (in the fluid medium) and by the low-frequency Biot's equations (in the porous medium). Interface conditions are introduced to model various hydraulic contacts between the two media: open pores, sealed pores, and imperfect pores. Well-possedness of the initial-boundary value problem is proven. Cartesian grid numerical methods previously developed in porous heterogeneous media are adapted to the present context: a fourth-order ADER scheme with Strang splitting for time-marching; a space-time mesh-refinement to capture the slow compressional wave predicted by Biot's theory; and an immersed interface method to discretize the interface conditions and to introduce a subcell resolution. Numerical experiments and comparisons with exact solutions are proposed for the three types of interface conditions, demonstrating the accuracy of the approach.
\end{abstract}
%%%%% end %%%%%%%%%%%

%%%%% AMS/PACs/Keywords %%%%%%%%%%%
%\pac{}
\ams{35L05, 35L50, 65N06, 65N85, 74F10}
%35L05 wave equation
%35L50 initial boundary-value problems, 1st order hyperbolic system
%65N06 finite-difference methods
%65N85 fictitious domain methods
%74F10 linear waves
\keywords{Biot's model, poroelastic waves, jump conditions, imperfect hydraulic contact, high-order finite differences, immersed interface method.}

\maketitle

%------------------------------------------------------------------------------------------

\section{Introduction}\label{SecIntro}

The theory developed by Biot in 1956 \cite{BIOT56-A,BIOT56-B} is largely used to describe the wave propagation in poroelastic media. Three kinds of waves are predicted: the usual shear wave and "fast" compressional wave (as in elastodynamics), and an additional "slow" compressional wave observed experimentally in 1981 \cite{PLONA80}. This slow wave is a static mode below a critical frequency, depending on the viscosity of the saturating fluid. In the current study, we will focus on this low-frequency range.

The coupling between acoustic and poroelastic media is of high interest in many applications: sea bottom in underwater acoustics \cite{STOLL70}, borehole logging in civil engineering \cite{ROSENBAUM74}, and bones in biomechanics \cite{HAIRE99}. Many theoretical efforts have dealt with the acoustic / porous wave propagation. Various boundary conditions have been proposed to describe the hydraulic contacts: open pores, sealed pores, and imperfect pores involving the hydraulic permeability of the interface \cite{BOURBIE87,FENG83a,ROSENBAUM74}. Reflection and transmission coefficients of plane waves have been derived \cite{WU90}. The influence of the interface conditions on the existence of surface waves has been investigated in the case of inviscid \cite{FENG83a} and viscous saturating fluids \cite{EDELMAN04,GUBAIDULLIN04} in the porous material. The time-domain Green's function has been computed by the Cagniard-de Hoop's method \cite{FENG83b,DIAZ10}. Experimental works have shown the crucial importance of hydraulic contact on the generation of slow compressional wave \cite{RASOLOFOSAON88}.

The literature dedicated to numerical methods for porous wave propagation is large: see \cite{LEMOINE11}, \cite{CARCIONE10} for a review, and the introduction of \cite{CHIAVASSA11} for a list of time-domain methods. Coupled fluid / porous configurations have been addressed by an integral method \cite{GUREVICH99-B}, a spectral-element method \cite{MORENCY08}, and a pseudospectral method \cite{SIDLER10}, to cite a few. To simulate efficiently wave propagation in fluid / porous media, numerical methods must overcome the following difficulties: 
\begin{itemize}
\item in the low-frequency range, the slow compressional wave is a diffusive-like solution, and the evolution equations become stiff \cite{PUENTE08}. It drastically restricts the stability condition of  any explicit method;
\item the diffusive slow compressional wave remains localized near the interfaces. Capturing this wave - that plays a key role on the balance equations - requires a very fine spatial mesh;
\item an accurate description of arbitrary-shaped geometries with various interface conditions is crucial. These properties are badly discretized by finite-difference methods on Cartesian grids. Alternatively, unstructured meshes provide accurate descriptions, but the computational effort greatly increases.
\item an accurate modeling of the hydraulic contact at the interface is also required. In particular, as far as we know, imperfect pore conditions still have not been addressed in numerical models.
\end{itemize}
To overcome these difficulties, we adapt a methodology previously developed in porous / porous media \cite{CHIAVASSA11} and fluid / viscoelastic media \cite{LOMBARD11}. Three Cartesian grid numerical methods are put together. A fourth-order ADER scheme with Strang splitting is used to integrate the evolution equations, ensuring an optimal CFL condition of stability. Specific solvers are used in the fluid medium and in the porous medium. Their coupling is ensured by an immersed interface method, that discretizes the interface conditions and provides a subcell resolution of the geometries. Lastly, a space-time mesh refinement around the interfaces captures the small scale of the slow waves.

The article is organized as follows. In section \ref{SecPhys}, acoustics and poroelastic equations are  recalled. We introduce the interface conditions, and we prove that the initial boundary-value problem  is well-posed. In section \ref{SecNum}, numerical tools are presented. In section \ref{SecRes}, numerical experiments are proposed, based on realistic sets of physical parameters. Comparisons with analytical solutions demonstrate the accuracy of our approach. In section \ref{SecConclu}, future lines of investigation are proposed.

%------------------------------------------------------------------------------------------
%------------------------------------------------------------------------------------------

\section{Physical modeling}\label{SecPhys}

\subsection{Acoustics and Biot's equations}\label{SecPhysBiot}

\begin{figure}[htbp]
\begin{center}
\begin{tabular}{c}
\includegraphics[scale=0.70]{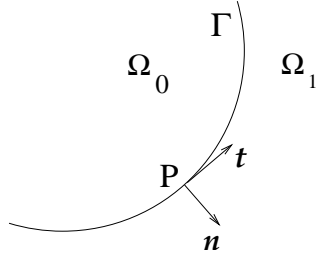} 
\end{tabular}
\end{center}
\caption{Interface $\Gamma$ separating a fluid medium $\Omega_0$ and a poroelastic medium $\Omega_1$.}
\label{Fig2D}
\end{figure}

Let us consider a 2D domain with a fluid medium $\Omega_0$ and a poroelastic medium $\Omega_1$ (figure \ref{Fig2D}). The interface $\Gamma$ separating $\Omega_0$ and $\Omega_1$ is described by a parametric equation $(x(\tau),\,y(\tau))$ (figure \ref{Fig2D}). Tangential vector ${\bf t}$ and normal vector ${\bf n}$ are defined at each point $P$ along $\Gamma$ by:
\begin{equation}
{\bf t}=(x^{'},\,y^{'})^T,\qquad {\bf n}=(y^{'},\,-x^{'})^T.
\label{NT}
\end{equation} 
The derivatives $x^{'}=\frac{d\,x}{d\,\tau}$ and $y^{'}=\frac{d\,y}{d\,\tau}$ are assumed to be continuous everywhere along $\Gamma$, and to be differentiable as many times as required further. 

In the fluid domain $\Omega_0$, the physical parameters are the density $\rho_f$ and the celerity of acoustic waves $c$. The acoustics equations write
\begin{equation}
\left\{
\begin{array}{l}
\displaystyle
\rho_f\,\frac{\textstyle \partial\,{\bf v}}{\textstyle \partial\,t}+{\bf \nabla}\,p={\bf 0},\\
[10pt]
\displaystyle
\frac{\textstyle \partial\,p}{\textstyle \partial\,t}+\rho_f\,c^2\,{\bf \nabla}.{\bf v}=f_p,
\end{array}
\right.
\label{LCfluide}
\end{equation}
where ${\bf v}=(v_1,\,v_2)^T$ is the acoustic velocity and $p$  the the acoustic pressure; $f_p$ represents an external source term.

The poroelastic medium $\Omega_1$ is modeled by the low-frequency Biot equations \cite{BOURBIE87} where the physical parameters are 
\begin{itemize}
\item the dynamic viscosity $\eta$ and the density $\rho_f$ of the saturating fluid. The latter is assumed to be the same than in $\Omega_0$, hence the notation $\rho_f$ is used in both cases; 
\item the density $\rho_s$ and the shear modulus $\mu$ of the elastic skeleton; 
\item the porosity $0<\phi<1$, the tortuosity $a>1$, the absolute permeability $\kappa$, the Lam\'e coefficient of the saturated matrix $\lambda_f$, and the two Biot's coefficients $\beta$ and $m$ of the isotropic matrix. 
\end{itemize}
The conservation of momentum and the constitutive laws yield
\begin{equation}
\left\{
\begin{array}{l}
\displaystyle
\rho\,\frac{\textstyle \partial\,{\bf v}_s}{\textstyle \partial\,t}+\rho_f\,\frac{\textstyle \partial\,{\bf w}}{\textstyle \partial\,t}-{\bf \nabla}.\boldsymbol{\sigma}={\bf 0},\\
[10pt]
\displaystyle
\rho_f\,\frac{\textstyle \partial\,{\bf v}_s}{\textstyle \partial\,t}+\rho_w\,\frac{\textstyle \partial\,{\bf w}}{\textstyle \partial\,t}+\frac{\textstyle \eta}{\textstyle \kappa}\,{\bf w}+{\nabla}\,p=0,\\
[10pt]
\displaystyle
\boldsymbol{\sigma}={\cal C}\,\boldsymbol{\varepsilon}({\bf u}_s)-\beta\,p\,{\bf I},\\
[10pt]
\displaystyle
p=-m\,\left(\beta\,{\bf \nabla}.{\bf u}_s+{\bf \nabla}.{\bf \cal W}\right),
\end{array}
\right.
\label{LCBiot}
\end{equation}
where ${\bf v}_s=\frac{\partial\,{\bf u}_s}{\partial\,t}=(v_{s1},\,v_{s2})^T$ is the elastic velocity, ${\bf w}=\phi\,({\bf v}_f-{\bf v}_s)=\frac{\partial\,{\bf \cal W}}{\partial\,t}=(w_1,\,w_2)^T$ is the filtration velocity, ${\bf v}_f$ is the fluid velocity, $\boldsymbol{\sigma}$ is the elastic stress tensor, $\boldsymbol{\varepsilon}({\bf u}_s)=\frac{1}{2}\,\left({\bf \nabla\,u}_s+^T{\bf \nabla\,u}_s\right)$ is the elastic strain tensor, and $p$ is the pressure. The following notations have also been used in (\ref{LCBiot}): $\rho_w=\frac{a}{\phi}\,\rho_f$,  $\rho=\phi\,\rho_f+(1-\phi)\,\rho_s$, and
\begin{equation}
{\cal C}=\left(
\begin{array}{ccc}
\lambda_0+2\,\mu & 0 & \lambda_0\\
[8pt]
0 & 2\,\mu & 0\\
[8pt]
\lambda_0 & 0 & \lambda_0+2\,\mu
\end{array}
\right),
\label{MatBiot}
\end{equation}
where $\lambda_0=\lambda_f-\beta^2\,m$ is the Lam\'e coefficient of the dry matrix. 

To be valid, the second equation of (\ref{LCBiot}) requires that the spectrum of the waves lies mainly in the low-frequency range, involving frequencies lower than 
\begin{equation}
f_c=\frac{\textstyle \eta\,\phi}{\textstyle 2\,\pi\,a\,\kappa\,\rho_f}.
\label{Fc}
\end{equation}
If $f\geq f_c$, more sophisticated models are required \cite{BIOT56-B,LU05} that are not addressed here. 

%------------------------------------------------------------------------------------------

\subsection{Evolution equations}\label{SecPhysEDP}

A velocity-stress formulation of the evolution equations is obtained by differentiating the last two equations in (\ref{LCBiot}) in terms of  time $t$. Setting
\begin{equation}
{\bf U}=
\left\{
\begin{array}{ll}
\displaystyle
\left(v_1,\,v_2,\,p\right)^T\qquad \hspace{3.3cm}\mbox{ in } \Omega_0,\\
[10pt]
\left(v_{s1},\,v_{s2},\,w_1,\,w_2,\,\sigma_{11},\,\sigma_{12},\,\sigma_{22},\,p\right)^T\qquad \mbox{ in } \Omega_1,
\end{array}  
\right.
\label{VecU}
\end{equation}
where $\sigma_{11}$, $\sigma_{12}$, and $\sigma_{22}$ are the independent components of the stress tensor $\boldsymbol{\sigma}$, one deduces from (\ref{LCfluide}) and (\ref{LCBiot}) the first-order linear system of partial differential equations with source term
\begin{equation}
\frac{\textstyle \partial}{\textstyle \partial\,t}\,{\bf U}+{\bf A}\,\frac{\textstyle \partial}{\textstyle \partial\,x}\,{\bf U}+{\bf B}\,\frac{\textstyle \partial}{\textstyle \partial\,y}\,{\bf U}=-{\bf S}\,{\bf U}+{\bf F}.
\label{SystHyp}
\end{equation}
The system (\ref{SystHyp}) is completed by initial values and radiation conditions at infinity.

In (\ref{SystHyp}), ${\bf A}$, ${\bf B}$ and ${\bf S}$ are $3\times3$ matrices in $\Omega_0$, and $8\times8$ matrices in $\Omega_1$; the vector ${\bf F}$ accounts for the acoustic source in (\ref{LCfluide}). In $\Omega_0$, ${\bf S}={\bf 0}$, while in $\Omega_1$ the spectral radius of ${\bf S}$ is
\begin{equation}
R({\bf S})=\frac{\textstyle \eta}{\textstyle \kappa}\,\frac{\textstyle \rho}{\textstyle \rho\,\rho_w-\rho_f^2},
\label{RadiusSpectrum}
\end{equation} 
which can be large, depending on the hydraulic permeability $\eta/ \kappa$.

\begin{figure}[htbp]
\begin{center}
\begin{tabular}{cc}
(a) & (b)\\
\includegraphics[scale=0.33]{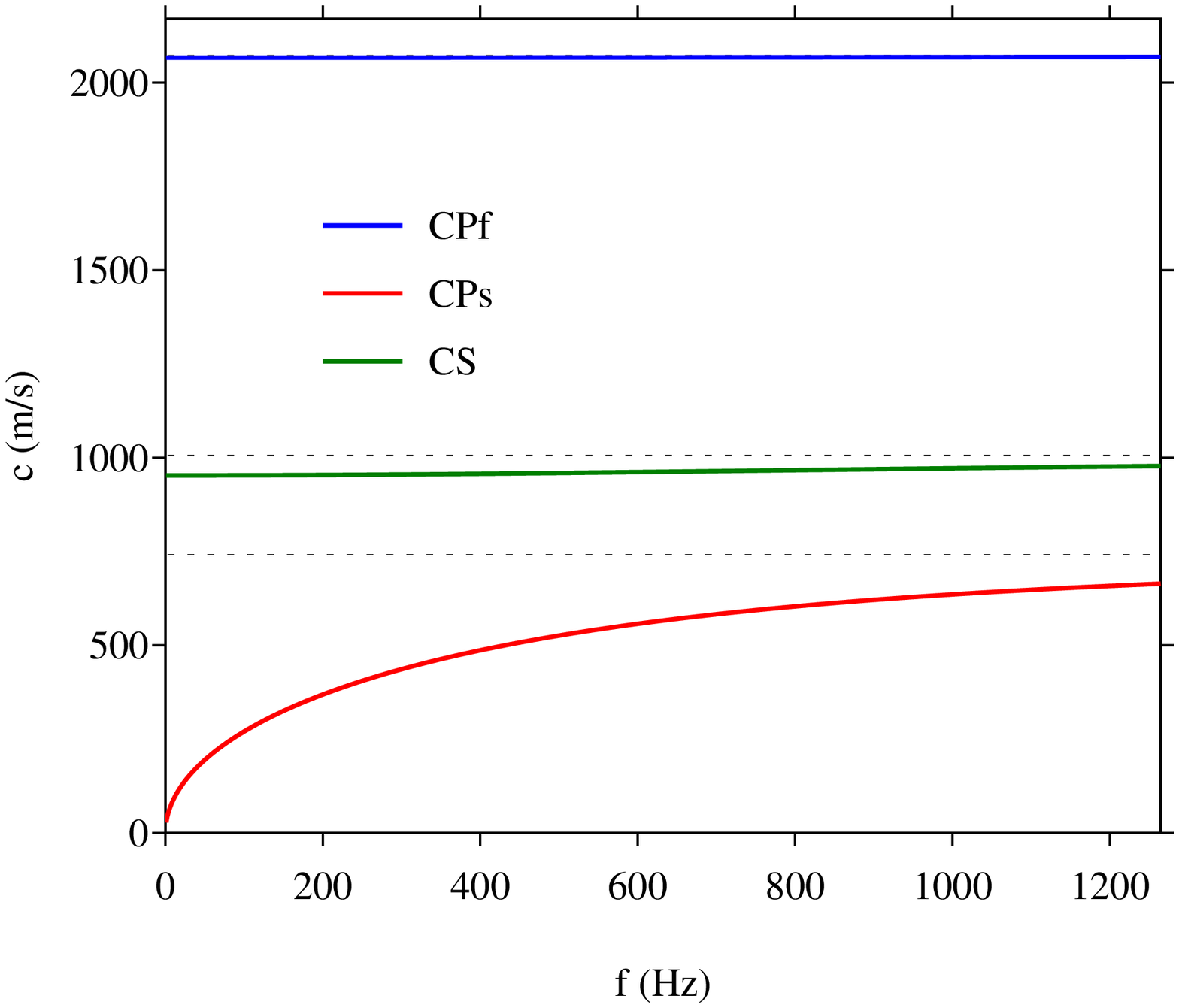} &
\includegraphics[scale=0.33]{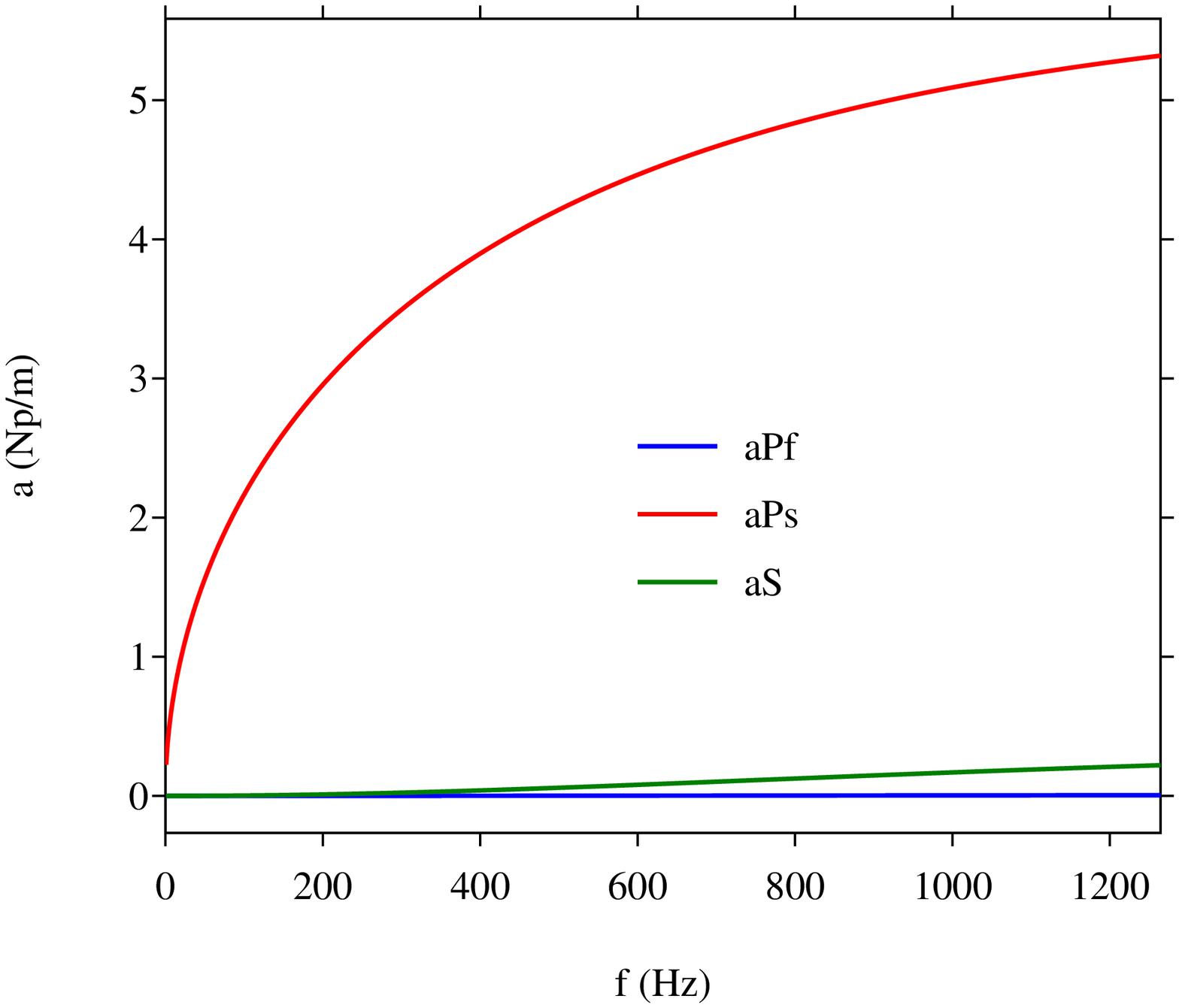} 
\end{tabular}
\end{center}
\caption{Phase velocities (a) and attenuations (b) of Biot's waves, using the parameters given in table \ref{TabParametres}. $pf$: fast compressional wave; $ps$: slow compressional wave; $s$: shear wave. In (a), the horizontal dotted lines refer to the eigenvalues $c_{pf}^\infty$, $c_{ps}^\infty$ and $c_{s}^\infty$ of ${\bf A}$ and ${\bf B}$.}
\label{FigDispersion}
\end{figure}

The non-null eigenvalues of ${\bf A}$ and ${\bf B}$ are real: $\pm c$ (acoustic wave) in $\Omega_0$; $\pm c_{pf}^\infty$ (fast compressional wave), $\pm c_{ps}^\infty$ (slow compressional wave), and $\pm c_{s}^\infty$ (shear wave) in $\Omega_1$, satisfying $0<\max(c_{ps}^\infty,\,c_{s}^\infty)<c_{pf}^\infty$. These eigenvalues in $\Omega_1$ are the high-frequency limits of the phase velocities of the poroelastic waves:
\begin{equation}
\lim_{f\rightarrow +\infty}c_{pf}(f)=c_{pf}^\infty,\qquad
\lim_{f\rightarrow +\infty}c_{ps}(f)=c_{ps}^\infty,\qquad
\lim_{f\rightarrow +\infty}c_{s}(f)=c_{s}^\infty.
\label{Dispersion}
\end{equation}
The fast compressional wave and the shear wave are almost non-dispersive and non-diffusive solutions. On the contrary, the phase velocity of the slow compressional wave tends to zero with frequency (figure \ref{FigDispersion}-a); at higher frequencies, this wave propagates, but it is highly attenuated (figure \ref{FigDispersion}-b). For a detailed dispersion analysis, the reader is referred to standard texts \cite{BIOT56-A,CARCIONE07}.

%------------------------------------------------------------------------------------------

\subsection{Hydraulic contacts}\label{SecPhysJC}

Four waves are involved in the acoustic / porous configuration: one acoustic wave in $\Omega_0$, and three poroelastic waves in $\Omega_1$. Consequently, four independent interface conditions need to be defined along $\Gamma$. Authors \cite{BOURBIE87,FENG83a,GUREVICH99,ROSENBAUM74} have proposed the following general conditions:
\begin{subnumcases}{\label{JC}}
\displaystyle
{\bf v}_0.{\bf n}={\bf v}_{s1}.{\bf n}+{\bf w}_1.{\bf n},\label{JCa}\\
[10pt]
\displaystyle
-p_0\,{\bf n}=\boldsymbol{\sigma}_1.{\bf n},\label{JCb}\\
[10pt]
\displaystyle
[p]=-\frac{\textstyle 1}{\textstyle \mathcal{K}}\,\frac{\textstyle {\bf w}_1.{\bf n}}{\textstyle |{\bf n}|},\label{JCc}
\end{subnumcases}
where the subscripts 0 and 1 refer to the traces on the $\Omega_0$ or $\Omega_1$ sides, and $[p]$ denotes the jump of $p$ from $\Omega_0$ to $\Omega_1$. The first scalar equation (\ref{JCa}) follows from the conservation of fluid mass. The second vectorial equation (\ref{JCb}) expresses the continuity of normal efforts. The third scalar equation (\ref{JCc}) is a local Darcy's law. It models the hydraulic contact between the fluid and the porous medium, and  involves an additional parameter $\mathcal{K}$, called the hydraulic permeability of the interface. The division by $|{\bf n}|$ ensures that the hydraulic contact is independent from the choice of the parametric equation of $\Gamma$. According to the value of $\mathcal{K}$, various limit-cases are encountered:
\begin{itemize}
\item if $\mathcal{K} \rightarrow +\infty$, then the last equation in (\ref{JC}) becomes $[p]=0$, modeling the commonly used {\it open pores};
\item if $\mathcal{K}\rightarrow 0$, then no fluid exchange occurs across $\Gamma$, and the last equation in (\ref{JC}) is replaced by ${\bf w}_1.{\bf n}= 0$, modeling {\it sealed pores};  
\item if $0<\mathcal{K}<+\infty$, then an intermediate state between open pores and sealed pores is reached, modeling {\it imperfect pores}.
\end{itemize}

The following proposition states that the interface conditions (\ref{JC}) coupled with the evolution equations (\ref{LCfluide}) and (\ref{LCBiot}) yield a well-posed problem.

\begin{proposition}
Let
$$
E=E_1+E_2+E_3,
$$
with
\begin{equation}
\begin{array}{lll}
E_1 &=& \displaystyle \frac{\textstyle 1}{\textstyle 2}\int_{\Omega_0}\left(\rho_f\,{\bf v}^2+\frac{\textstyle 1}{\textstyle \rho_f\,c^2}\,p^2\right)\,d\Omega,\\
\\
E_2 &=& \displaystyle \frac{\textstyle 1}{\textstyle 2}\int_{\Omega_1}\left(\rho\,{\bf v}_s^2+\rho_w\,{\bf w}^2+2\,\rho_f\,{\bf v}_s\,{\bf w}\right)\,d\Omega,\\
\\
E_3 &=& \displaystyle \frac{\textstyle 1}{\textstyle 2}\int_{\Omega_1}\left({\cal C}\,\boldsymbol{\varepsilon}({\bf u}_s):\boldsymbol{\varepsilon}({\bf u}_s)+\frac{\textstyle 1}{\textstyle m}\,p^2\right)\,d\Omega.
\end{array}
\label{Energie}
\end{equation}
Then, $E$ is an energy which satisfies 
\begin{equation}
\frac{\textstyle d\,E}{\textstyle d\,t}=-\int_{\Omega_1}\frac{\textstyle \eta}{\textstyle \kappa}\,{\bf w}^2\,d\Omega-\int_{\Gamma}\frac{\textstyle 1}{\textstyle \mathcal{K}}\,\frac{\textstyle \left({\bf w}_1.{\bf n}\right)^2}{\textstyle |{\bf n}|}\,d\Gamma \, \leq 0.
\label{dEdt}
\end{equation}
\label{PropositionNRJ}
\end{proposition}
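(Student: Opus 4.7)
The plan is to differentiate $E=E_1+E_2+E_3$ term by term, insert the evolution equations (\ref{LCfluide}) and (\ref{LCBiot}) to replace time derivatives by spatial ones, apply the divergence theorem to concentrate each volume integral on the interface $\Gamma$, and then collapse the resulting interface integrand via the three jump conditions (\ref{JC}). I assume $f_p\equiv 0$ and fast enough decay at infinity, so that the outer boundary contributes nothing; $\hat{\bf n}={\bf n}/|{\bf n}|$ denotes the unit normal pointing from $\Omega_0$ to $\Omega_1$, with arc-length element $d\sigma=|{\bf n}|\,d\tau$.

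For $E_1$, substituting $\rho_f\,\partial_t{\bf v}=-\nabla p$ and $\partial_t p=-\rho_f c^2\,\nabla.{\bf v}$ under the integral and applying the divergence theorem yields $dE_1/dt=-\int_\Gamma p_0\,{\bf v}_0.\hat{\bf n}\,d\sigma$. For $E_2$, the kinetic terms regroup as ${\bf v}_s.(\rho\,\partial_t{\bf v}_s+\rho_f\,\partial_t{\bf w})+{\bf w}.(\rho_f\,\partial_t{\bf v}_s+\rho_w\,\partial_t{\bf w})$, which the first two equations of (\ref{LCBiot}) turn into ${\bf v}_s.(\nabla.\boldsymbol{\sigma})-{\bf w}.\nabla p-(\eta/\kappa)\,{\bf w}^2$; the last term is already the volume dissipation appearing in (\ref{dEdt}), while the first two are volume terms to be cancelled against $E_3$. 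For $E_3$, the constitutive laws ${\cal C}\,\boldsymbol{\varepsilon}({\bf u}_s)=\boldsymbol{\sigma}+\beta p\,{\bf I}$ and $\partial_t p=-m(\beta\,\nabla.{\bf v}_s+\nabla.{\bf w})$ make the two $\beta p\,\nabla.{\bf v}_s$ cross-terms cancel, leaving $dE_3/dt=\int_{\Omega_1}(\boldsymbol{\sigma}:\boldsymbol{\varepsilon}({\bf v}_s)-p\,\nabla.{\bf w})\,d\Omega$. Integration by parts (with outward normal $-\hat{\bf n}$ on $\partial\Omega_1\cap\Gamma$) rewrites this as $-\int_\Gamma(\boldsymbol{\sigma}_1.\hat{\bf n}).{\bf v}_{s1}\,d\sigma+\int_\Gamma p_1\,{\bf w}_1.\hat{\bf n}\,d\sigma-\int_{\Omega_1}({\bf v}_s.(\nabla.\boldsymbol{\sigma})-{\bf w}.\nabla p)\,d\Omega$, whose volume piece exactly cancels the remaining volume piece of $dE_2/dt$.

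Adding the three contributions leaves the volume dissipation and the interface integrand $I=-p_0\,{\bf v}_0.\hat{\bf n}-(\boldsymbol{\sigma}_1.\hat{\bf n}).{\bf v}_{s1}+p_1\,{\bf w}_1.\hat{\bf n}$. I then invoke (\ref{JC}) in three successive cleanups: (\ref{JCb}) gives $\boldsymbol{\sigma}_1.\hat{\bf n}=-p_0\,\hat{\bf n}$, hence $I=-p_0({\bf v}_0-{\bf v}_{s1}).\hat{\bf n}+p_1\,{\bf w}_1.\hat{\bf n}$; (\ref{JCa}) gives $({\bf v}_0-{\bf v}_{s1}).\hat{\bf n}={\bf w}_1.\hat{\bf n}$, so $I=[p]\,{\bf w}_1.\hat{\bf n}$; and (\ref{JCc}) collapses this to $I=-(1/\mathcal{K})({\bf w}_1.\hat{\bf n})^2$. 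Converting back via $({\bf w}_1.\hat{\bf n})^2\,d\sigma=({\bf w}_1.{\bf n})^2/|{\bf n}|\,d\tau$ reproduces the interface term of (\ref{dEdt}) verbatim.

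The main obstacle will be bookkeeping: keeping signs consistent across the two traces on $\Gamma$ and translating cleanly between the non-unit ${\bf n}$ used in (\ref{JC}) and the unit $\hat{\bf n}$ the divergence theorem prefers. The factor $|{\bf n}|$ appearing in (\ref{JCc}) is precisely what renders this translation parametrization-independent, and it is essential that the raw interface integrand reassembles into $[p]\,{\bf w}_1.\hat{\bf n}$ in exactly the order above before (\ref{JCc}) can be used. As a side check that justifies the word \emph{energy}, one verifies $E\geq 0$: $E_1$ and $E_3$ are non-negative under the usual sign hypotheses on $\rho_f$, $c$, $m$, $\mu$ and ${\cal C}$, whereas $E_2\geq 0$ reduces to $\rho\,\rho_w-\rho_f^2>0$, which follows from the tortuosity bound $a>1$ since $\rho\,\rho_w=(a/\phi)\,\rho_f[\phi\,\rho_f+(1-\phi)\,\rho_s]>\rho_f^2$.
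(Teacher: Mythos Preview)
Your proof is correct and follows essentially the same route as the paper: differentiate the energy, substitute the evolution equations, integrate by parts to push everything to $\Gamma$, and then apply (\ref{JCb}), (\ref{JCa}), (\ref{JCc}) in that order to collapse the boundary integrand; the positivity of $E_2$ via $\rho\rho_w-\rho_f^2>0$ is likewise the paper's argument. The only cosmetic difference is that you organize the computation around $dE_i/dt$ and are explicit about the unit versus non-unit normal, whereas the paper works equation-by-equation with the non-unit ${\bf n}$ and $d\Gamma=d\tau$ throughout.
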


\begin{proof} The first equation of (\ref{LCfluide}) is multiplied by ${\bf v}$ and  integrated over $\Omega_0$:
\begin{equation}
\int_{\Omega_0}\left(\rho_f\,{\bf v}\,\frac{\textstyle \partial\,{\bf v}}{\textstyle \partial\,t}+{\bf v}\,{\bf \nabla}\,p\right)\,d\Omega=0.
\label{Proof1}
\end{equation}
The first term in (\ref{Proof1}) writes
\begin{equation}
\int_{\Omega_0}\rho_f\,{\bf v}\,\frac{\textstyle \partial\,{\bf v}}{\textstyle \partial\,t}\,d\Omega=\frac{\textstyle d}{\textstyle d\,t}\,\frac{\textstyle 1}{\textstyle 2}\int_{\Omega_0}\rho_f\,{\bf v}^2\,d\Omega.
\label{Proof2}
\end{equation}
By integration by part, and using the second equation of (\ref{LCfluide}), one obtains
\begin{equation}
\begin{array}{lll}
\displaystyle
\int_{\Omega_0}{\bf v}\,{\bf \nabla}\,p\,d\Omega&=& \displaystyle \int_{\Gamma}{\bf v}_0.{\bf n}\,p_0\,d\Gamma-\int_{\Omega_0}{\bf \nabla}.{\bf v}\,p\,d\Omega,\\
[12pt]
&=&\displaystyle \int_{\Gamma}{\bf v}_0.{\bf n}\,p_0\,d\Gamma+\int_{\Omega_0}\frac{\textstyle 1}{\textstyle \rho_f\,c^2}\,p\,\frac{\textstyle \partial\,p}{\textstyle \partial\,t}\,d\Omega,\\
[12pt]
&=&\displaystyle \int_{\Gamma}{\bf v}_0.{\bf n}\,p_0\,d\Gamma+\frac{\textstyle d}{\textstyle d\,t}\,\frac{\textstyle 1}{\textstyle 2}\int_{\Omega_0}\frac{\textstyle 1}{\textstyle \rho_f\,c^2}\,p^2\,d\Omega,
\end{array}
\label{Proof3}
\end{equation}
which concludes the energy analysis in the fluid domain. The first equation of (\ref{LCBiot}) is multiplied by ${\bf v}_s$ and is integrated over $\Omega_1$:
\begin{equation}
\int_{\Omega_1}\left(\rho\,{\bf v}_s\,\frac{\textstyle \partial\,{\bf v}_s}{\textstyle \partial\,t}+\rho_f\,{\bf v}_s\,\frac{\textstyle \partial\,{\bf w}}{\textstyle \partial\,t}-{\bf v}_s\,{\bf \nabla}.\boldsymbol{\sigma}\right)\,d\Omega=0.
\label{Proof4}
\end{equation}
The first term in (\ref{Proof4}) writes
\begin{equation}
\int_{\Omega_1}\rho\,{\bf v}_s\,\frac{\textstyle \partial\,{\bf v}_s}{\textstyle \partial\,t}\,d\Omega=\frac{\textstyle d}{\textstyle d\,t}\,\frac{\textstyle 1}{\textstyle 2}\int_{\Omega_1}\rho\,{\bf v}_s^2\,d\Omega.
\label{Proof5}
\end{equation}
By integration by part, and using the third equation of (\ref{LCBiot}), one obtains
\begin{equation}
\begin{array}{lll}
\displaystyle
-\int_{\Omega_1}{\bf v}_s\,{\bf \nabla}.\boldsymbol{\sigma}\,d\Omega &=& \displaystyle \int_{\Gamma}{\bf v}_{s1}.\boldsymbol{\sigma}_1.{\bf n}\,d\Gamma+\int_{\Omega_1}\boldsymbol{\varepsilon}({\bf v}_s):\boldsymbol{\sigma}\,d\Omega,\\
[12pt]
&=& \displaystyle \int_{\Gamma}{\bf v}_{s1}.\boldsymbol{\sigma}_1.{\bf n}\,d\Gamma+\int_{\Omega_1}\boldsymbol{\varepsilon}({\bf v}_s):\left({\cal C}\,\boldsymbol{\varepsilon}({\bf u}_s)-\beta\,p\,{\bf I}\right)\,d\Omega,\\
[12pt]
&=& \displaystyle \int_{\Gamma}{\bf v}_{s1}.\boldsymbol{\sigma}_1.{\bf n}\,d\Gamma+\int_{\Omega_1}  \frac{\textstyle \partial}{\textstyle \partial\,t}\,\boldsymbol{ \varepsilon}({\bf u}_s):{\cal C}\,\boldsymbol{\varepsilon}({\bf u}_s)\,d\Omega-\int_{\Omega_1}\beta\,p\,{\bf \nabla}.{\bf v}_s\,d\Omega,\\
[12pt]
&=& \displaystyle \int_{\Gamma}{\bf v}_{s1}.\boldsymbol{\sigma}_1.{\bf n}\,d\Gamma+\frac{\textstyle d}{\textstyle d\,t}\,\left(\frac{\textstyle 1}{\textstyle 2}\,\int_{\Omega_1}{\cal C}\,\boldsymbol{\varepsilon}({\bf u}_s):\boldsymbol{\varepsilon}({\bf u}_s)\,d\Omega\right)-\int_{\Omega_1}\beta\,p\,{\bf \nabla}.{\bf v}_s\,d\Omega.
\end{array}
\label{Proof6}
\end{equation}
The second equation of (\ref{LCBiot}) is multiplied by ${\bf w}$, and then it is integrated over $\Omega_1$:
\begin{equation}
\int_{\Omega_1}\left(\rho_f\,{\bf w}\,\frac{\textstyle \partial\,{\bf v}_s}{\textstyle \partial\,t}+\rho_w\,{\bf w}\,\frac{\textstyle \partial\,{\bf w}}{\textstyle \partial\,t}+\frac{\textstyle \eta}{\textstyle \kappa}\,{\bf w}^2+{\bf w}\,{\bf \nabla}\,p\right)\,d\Omega=0.
\label{Proof7}
\end{equation}
The second term in (\ref{Proof7}) writes
\begin{equation}
\int_{\Omega_1}\rho_w\,{\bf w}\,\frac{\textstyle \partial\,{\bf w}}{\textstyle \partial\,t}\,d\Omega=\frac{\textstyle d}{\textstyle d\,t}\,\frac{\textstyle 1}{\textstyle 2}\int_{\Omega_1}\rho_w\,{\bf w}^2\,d\Omega.
\label{Proof8}
\end{equation}
By integration by part, and using the fourth equation of (\ref{LCBiot}), one obtains
\begin{equation}
\begin{array}{lll}
\displaystyle
\int_{\Omega_1}{\bf w}\,{\bf \nabla}\,p\,d\Omega&=& \displaystyle -\int_{\Gamma}{\bf w}_1.{\bf n}\,p_1\,d\Gamma-\int_{\Omega_1}p\,{\bf \nabla}.{\bf w}\,d\Omega,\\
[12pt]
&=& \displaystyle -\int_{\Gamma}{\bf w}_1.{\bf n}\,p_1\,d\Gamma+\int_{\Omega_1}p\,\frac{\textstyle \partial}{\textstyle \partial\,t}\,\left(\frac{\textstyle 1}{\textstyle m}\,p+\beta\,{\bf \nabla}.{\bf u}_s \right)\,d\Omega,\\
[12pt]
&=& \displaystyle -\int_{\Gamma}{\bf w}_1.{\bf n}\,p_1\,d\Gamma+\frac{\textstyle d}{\textstyle d\,t}\,\left(\frac{\textstyle 1}{\textstyle 2}\,\int_{\Omega_1}\frac{\textstyle 1}{\textstyle m}\,p^2\,d\Omega\right)+\int_{\Omega_1}\beta\,p\,{\bf \nabla}.{\bf v}_s\,d\Omega.
\end{array}
\label{Proof9}
\end{equation}
When adding (\ref{Proof4}) and (\ref{Proof7}), it remains  
\begin{equation}
\int_{\Omega_1}\rho_f\,\left({\bf v}_s\,\frac{\textstyle \partial\,{\bf w}}{\textstyle \partial\,t}+{\bf w}\,\frac{\textstyle \partial\,{\bf v}_s}{\textstyle \partial\,t}\right)\,d\Omega=\frac{\textstyle d}{\textstyle d\,t}\int_{\Omega_1}\rho_f\,{\bf v}_s\,{\bf w}\,d\Omega.
\label{Proof10}
\end{equation}
Equations (\ref{Proof1})-(\ref{Proof10}) provide
\begin{equation}
\frac{\textstyle d\,E}{\textstyle d\,t}=-\int_{\Omega_1}\frac{\textstyle \eta}{\textstyle \kappa}\,{\bf w}^2\,d\Omega-\psi,
\label{Proof11}
\end{equation}
where $\psi$ is simplified from the conditions (\ref{JC}):
\begin{equation}
\begin{array}{lll}
\displaystyle \psi &=& \displaystyle \int_{\Gamma}\left({\bf v}_0.{\bf n}\,p_0+{\bf v}_{s1}.{\bf n}\,\boldsymbol{\sigma}_1\,{\bf n}-{\bf w}_1.{\bf n}\,p_1\right)\,d\Gamma,\\
[12pt]
&=& \displaystyle \int_{\Gamma}\left({\bf v}_0.{\bf n}\,p_0-{\bf v}_{s1}.{\bf n}\,p_0-{\bf w}_1.{\bf n}\,p_1\right)\,d\Gamma,\\
[12pt]
&=& \displaystyle \int_{\Gamma}\left(\left({\bf v}_{s1}.{\bf n}+{\bf w}_1.{\bf n}\right)\,p_0-{\bf v}_{s1}.{\bf n}\,p_0-{\bf w}_1.{\bf n}\,p_1\right)\,d\Gamma,\\
[12pt]
&=& \displaystyle -\int_{\Gamma}{\bf w}_1.{\bf n}\,[p]\,d\Gamma,\\
[12pt]
&=& \displaystyle +\int_{\Gamma}\frac{\textstyle 1}{\textstyle \mathcal{K}}\,\frac{\textstyle \left({\bf w}_1.{\bf n}\right)^2}{\textstyle |{\bf n}|}\,d\Gamma,
\end{array}
\label{Proof12}
\end{equation}
which gives (\ref{dEdt}). It remains to prove that $E$ is a positive definite quadratic form. This is obviously the case of $E_1$ and $\frac{1}{m}\,p^2$ in $E_3$. Definite positivity of ${\cal C}\,\boldsymbol{\varepsilon}({\bf u}_s):\boldsymbol{\varepsilon}({\bf u}_s)$ is assumed by hypothesis on the elastic tensor ${\cal C}$ \cite{THESE_EZZIANI}, which concludes the analysis of $E_3$. Lastly, the matrix
$$
{\bf M}=
\left(
\begin{array}{cc}
\rho & \rho_f\\
\rho_f & \rho_w
\end{array}
\right)
$$
is definite positive:
$$
\begin{array}{lll}
\det\,{\bf M}&=&\rho\,\rho_w-\rho_f^2,\\
[6pt]
&=&\displaystyle (a-1)\,\rho_f^2+\frac{\textstyle 1-\phi}{\textstyle \phi}\,a\,\rho_f\,\rho_s\geq 0,
\end{array}
$$
which concludes the analysis of $E_2$.
\end{proof} 

Each term in (\ref{Energie}) has a clear physical significance: $E_1$ is the acoustical energy, $E_2$ is the poroelastic kinetic energy, and $E_3$ is the poroelastic potential energy; $E_3$ is easily computed from (\ref{MatBiot}) using the closed-form expression:
\begin{equation}
\begin{array}{lll}
\displaystyle
{\cal C}\,\boldsymbol{\varepsilon}({\bf u}_s):\boldsymbol{\varepsilon}({\bf u}_s)&=&
\displaystyle\frac{\textstyle \lambda_0+2\,\mu}{\textstyle 4\,\mu\,\left(\lambda_0+\mu\right)}\,\left((\sigma_{11}+\beta\,p)^2+(\sigma_{22}+\beta\,p)^2\right)+\frac{\textstyle 1}{\textstyle \mu}\,\sigma_{12}^2\\
[10pt]
&-&\displaystyle \frac{\textstyle \lambda_0}{\textstyle 2\,\mu\,\left(\lambda_0+\mu\right)}\,(\sigma_{11}+\beta\,p)\,(\sigma_{22}+\beta\,p).
\end{array}
\end{equation}

The decrease rate of the total energy is governed as usual by the intrinsic attenuation due to the viscous saturating fluid $\int_{\Omega_1}\frac{\eta}{\kappa}\,{\bf w}^2\,d\Omega$, but also by the imperfect pore condition $\int_{\Gamma}\frac{1}{\mathcal{K}}\,\left({\bf w}_1.{\bf n}\right)^2\,d\Gamma$. In particular, even if the viscous effects are neglected ($\eta=0$), a part of the mechanical energy is dissipated by the interface in the case of imperfect pores.

%------------------------------------------------------------------------------------------

\subsection{Matrix formulation of the interface conditions}\label{SecPhysMatrix}

The various limit-cases in (\ref{JC}) are written in an alternative way, well-suited for further discretization of interface conditions (section \ref{SecNumIIM} and appendix \ref{SecDetailsIIM}). First, the open pore conditions yield
\begin{equation}
\left\{
\begin{array}{l}
\displaystyle
{\bf v}_0.{\bf n}={\bf v}_{s1}.{\bf n}+{\bf w}_1.{\bf n},\\
[10pt]
\displaystyle
-p_0\,{\bf n}^2=\left(\boldsymbol{\sigma}_1.{\bf n}\right).\,{\bf n},\\
[10pt]
\displaystyle
\left(\boldsymbol{\sigma}_1.{\bf n}\right).\,{\bf t}=0,\\
[10pt]
\displaystyle
\left(\boldsymbol{\sigma}_1.{\bf n}\right).\,{\bf n}+p_1\,{\bf n}^2=0.
\end{array}
\right.
\label{JCOpen}
\end{equation}
Second, the sealed pore conditions yield
\begin{equation} 
\left\{
\begin{array}{l}
\displaystyle
{\bf v}_0.{\bf n}={\bf v}_{s1}.{\bf n},\\
[10pt]
\displaystyle
-p_0\,{\bf n}^2=\left(\boldsymbol{\sigma}_1.{\bf n}\right).\,{\bf n},\\
[10pt]
\displaystyle
\left(\boldsymbol{\sigma}_1.{\bf n}\right).\,{\bf t}=0,\\
[10pt]
\displaystyle
{\bf w}_1.{\bf n}=0.
\end{array}
\right.
\label{JCClose}
\end{equation}
Third and last, the imperfect pore conditions yield
\begin{equation}
\left\{
\begin{array}{l}
\displaystyle
{\bf v}_0.{\bf n}={\bf v}_{s1}.{\bf n}+{\bf w}_1.{\bf n},\\
[10pt]
\displaystyle
-p_0\,{\bf n}^2=\left(\boldsymbol{\sigma}_1.{\bf n}\right).\,{\bf n},\\
[10pt]
\displaystyle
\left(\boldsymbol{\sigma}_1.{\bf n}\right).\,{\bf t}=0,\\
[10pt]
\displaystyle
\left(\boldsymbol{\sigma}_1.{\bf n}\right).\,{\bf n}+p_1\,{\bf n}^2+\frac{\textstyle 1}{\textstyle \mathcal{K}}\,{\bf w}_1.{\bf n}|{\bf n}|=0.
\end{array}
\right.
\label{JCPerm}
\end{equation}
In the three cases (\ref{JCOpen})-(\ref{JCPerm}), the four scalar interface conditions are written as two jump conditions and two boundary conditions on the $\Omega_1$ side. 

These conditions are now written in a matrix way, useful for the further derivation of $r$-th order interface conditions ($r\geq 1$). On the side $\Omega_i$ ($i=0,\,1$), the boundary values of the spatial derivatives of ${\bf U}$ up to the $r$-th order are put in a vector ${\bf U}^r_i$:
\begin{equation}
\begin{array}{l}
\displaystyle
{\bf U}^r_i=\lim_{M\rightarrow P,\,M\in\Omega_i}
\left(
{\bf U}^T,
...,\,
\frac{\textstyle \partial^l}{\textstyle \partial\, x^{l-m}\,\partial\,y^m}\,{\bf U}^T,
...,\,
\frac{\textstyle \partial^r}{\textstyle \partial\,y^r}\,{\bf U}^T
\right)^T,
\label{Ur}
\end{array}
\end{equation}
where $l=0,\,...,\,r$ and $m=0,\,...,\,l$. The vector ${\bf U}^r_i$ has $n_v=3\,(r+1)\,(r+2)/2$ components in the fluid domain $\Omega_0$, and $n_v=4\,(r+1)\,(r+2)$ components in the poroelastic domain $\Omega_1$. Based on this formalism, the zero-th order interface conditions (\ref{JCOpen})-(\ref{JCPerm}) are written 
\begin{equation}
{\bf C}_1^0\,{\bf U}_1^0={\bf C}_0^0\,{\bf U}_0^0,\qquad
{\bf L}_1^0\,{\bf U}_1^0={\bf 0},
\label{JC0}
\end{equation}
where ${\bf C}_0^0$ is a $2 \times 3$ matrix, and ${\bf C}_1^0$, ${\bf L}_1^0$ are $2 \times 8$ matrices; they are detailed in appendix \ref{SecAnnexeMat}.  

%------------------------------------------------------------------------------------------
%------------------------------------------------------------------------------------------

\section{Numerical modeling}\label{SecNum}

\subsection{Integration of evolution equations}\label{SecNumLC}

The system (\ref{SystHyp}) is solved on a uniform Cartesian grid, with spatial mesh sizes $\Delta\,x=\Delta\,y$ and a time step $\Delta\,t$. Due to the source term in the poroelastic medium $\Omega_1$, a straightforward discretization of (\ref{SystHyp}) is inefficient: a Von-Neumann analysis of stability gives
\begin{equation}
\Delta t\leq \min \left(\frac{\textstyle \Delta x}{\textstyle \max(c)},\,\frac{\textstyle 2}{\textstyle R({\bf S})}\right),
\label{CFLdirect}
\end{equation}
where the spectral radius $R({\bf S})$ can become large (\ref{RadiusSpectrum}). It is more efficient to split (\ref{SystHyp}) and to solve alternatively
\begin{subnumcases}{\label{Strang}}
\displaystyle
\frac{\textstyle \partial}{\textstyle \partial\,t}\,{\bf U}+{\bf A}\,\frac{\textstyle \partial}{\textstyle \partial\,x}\,{\bf U} + {\bf B}\,\frac{\textstyle \partial}{\textstyle \partial\,y}\,{\bf U}={\bf 0}, \label{Strang1}\\
%[6pt]
\displaystyle
\frac{\textstyle \partial}{\textstyle \partial\,t}\,{\bf U}=-{\bf S}\,{\bf U}.
\hspace{3.1cm} \label{Strang2}
\end{subnumcases}
The operators used to solve the propagative part (\ref{Strang1}) and the diffusive part (\ref{Strang2}) are denoted by ${\bf H}_a$ and ${\bf H}_b$, respectively. Second-order Strang splitting is used \cite{LEVEQUE02}, hence time-marching can be written
\begin{equation}
\begin{array}{lllll}
\displaystyle
&\bullet& {\bf U}_{i,j}^{(1)}&=&{\bf H}_{b}(\frac{\Delta\,t}{2})\,{\bf U}_{i,j}^{n},\\
[6pt]
\displaystyle
&\bullet& {\bf U}_{i,j}^{(2)}&=&{\bf H}_{a}(\Delta\,t)\,{\bf U}_{i,j}^{(1)},\\
[6pt]
\displaystyle
&\bullet& {\bf U}_{i,j}^{n+1}&=&{\bf H}_{b}(\frac{\Delta\,t}{2})\,{\bf U}_{i,j}^{(2)}.
\end{array}
\label{AlgoSplitting}
\end{equation}
The discrete operator ${\bf H}_a$ in (\ref{AlgoSplitting}) is a fourth-order ADER scheme \cite{KASER07}. On a Cartesian grid, this  explicit scheme amounts to a fourth-order Lax-Wendroff scheme. It satisfies the stability condition $\max (c)\,\frac{\Delta\,t}{\Delta\,x}\leq 1$. The diffusive operator is exact and unconditionally stable: ${\bf H}_b(t)=e^{-{\bf S}\,t}$, see equation (18) in \cite{CHIAVASSA11}. If dissipation processes are neglected, then ${\bf H}_b$ is also the identity: it is always the case in $\Omega_0$, and it is also true in $\Omega_1$ when $\eta=0$.

When ${\bf S}\neq{\bf 0}$ (i.e. when $\eta \neq 0$), the coupling between parts (\ref{Strang1}) and (\ref{Strang2}) decreases formally the convergence rate from 4 to 2. In counterpart, the optimal condition of stability is recovered: $\max (c)\,\frac{\Delta\,t}{\Delta\,x}\leq 1$, which is essential for real parameters simulations. Without splitting, the time step would be divided by about 3 and 4 in the Test 2 and Test 3 of section \ref{SecRes}, respectively. 

%------------------------------------------------------------------------------------------

\subsection{Mesh refinement}\label{SecNumAMR}

In the low-frequency range, the slow compressional wave behaves like a diffusive non-propagating solution, with very small wavelength (section \ref{SecPhysEDP}). A very fine spatial mesh is therefore required. Since this wave remains localized at the interfaces where it is generated, space-time mesh refinement is a good strategy. For this purpose, we adapt a steady-state version of the algorithm proposed in \cite{BERGER98,BERGER84}. In the fine grid, both the spatial meshes and the time step are divided by a refinement factor $q$. Doing so ensures the same CFL number in each grid. The coupling between coarse and fine meshes is based on spatial and temporal interpolations.

Even if the refined zone is much smaller than the whole domain, the refinement greatly increases the computational cost. The factor $q$ therefore must be chosen adequately. We choose $q$ that ensures the same discretization of the slow wave, on the refined zone, than the fast wave, on the coarse grid. Direct calculations give
\begin{equation}
q(f_0)=\frac{{c}_{pf}(f_0)}{c_{ps}(f_0)},
\label{AMR}
\end{equation}
where $f_0$ is the central frequency of the signal. 

%------------------------------------------------------------------------------------------

\subsection{Discretization of the interface conditions}\label{SecNumIIM}

The discretization of the interface conditions requires special care. A straightforward stair-step representation of interfaces introduces first-order geometrical errors and yields spurious numerical diffractions. In addition, the jump conditions (\ref{JC}) are not enforced numerically if no special treatment is applied. Lastly, the smoothness requirements to solve (\ref{Strang1}) are not satisfied, decreasing the convergence rate of the ADER scheme. 

To remove these drawbacks while maintaining the efficiency of Cartesian grid methods, immersed interface methods constitute a good strategy \cite{LI94,LEVEQUE97}. Various formulations have been proposed in the literature; here, we follow the methodology proposed in acoustics / elastic media \cite{LOMBARD04}, viscoelastic media \cite{LOMBARD11}, and poroelastic media \cite{CHIAVASSA11}. 

\begin{figure}[htbp]
\begin{center}
\includegraphics[scale=0.75]{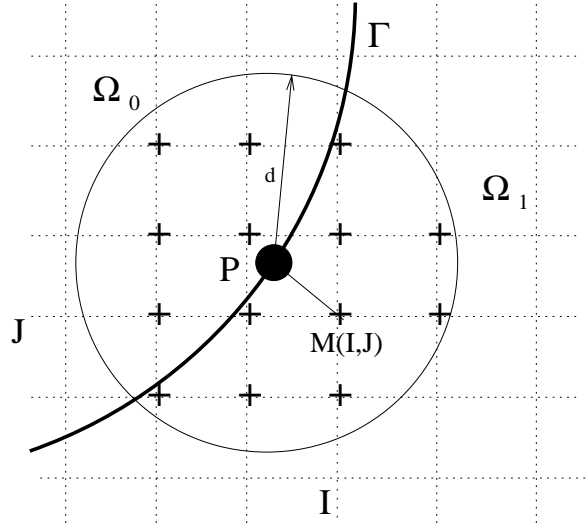} 
\caption{Irregular point $M(x_I,\,y_J)\in \Omega_1$ and its orthogonal projection $P$ onto $\Gamma$. The grid nodes used to compute ${\bf U}_{I,J}^*$ are inside the circle with radius $d$ and centered on $P$; they are denoted by ${\bf +}$.}
\label{Patate}
\end{center}
\end{figure}

To illustrate the basic principle of this method, let us take an irregular point $(x_i,\,y_j)\in\Omega_0$, where time-marching (\ref{Strang1}) uses a value at $(x_I,\,y_J)\in\Omega_1$ (figure \ref{Patate}). Instead of using ${\bf U}_{I,J}^{(1)}$, the discrete operator ${\bf H}_a$ uses a modified value ${\bf U}_{I,J}^*$. This latter is a $r$-th order extension of the solution from $\Omega_0$ into $\Omega_1$.

In a few words, ${\bf U}_{I,J}^*$ is build as follows. Let $P$ be the orthogonal projection of $(x_I,\,y_J)$ on $\Gamma$, and consider the disc ${\cal D}$ centered on $P$ with a radius $d$ (figure \ref{Patate}). Based on the interface conditions (\ref{JC0}) at $P$ and on the numerical values ${\bf U}^{(1)}$ at the grid nodes inside ${\cal D}$, a matrix ${\cal M}$ is build so that
\begin{equation}
{\bf U}_{I,J}^*={\cal M}\left({\bf U}^{(1)}\right)_{\mathcal D}.
\label{UIJ*}
\end{equation}
The derivation of matrix ${\cal M}$ in (\ref{UIJ*}) is detailed in appendix \ref{SecDetailsIIM}. Some comments are done about the immersed interface method:
\begin{enumerate}
\item A similar algorithm is applied at each irregular point along $\Gamma$ and at each propagative part of the splitting algorithm (\ref{Strang1}). Since the jump conditions do not vary with time, the evaluation of the matrices in (\ref{UIJ*}) is done during a preprocessing step. Only small matrix-vector products are therefore required at each splitting step. After optimization of the computer codes, this additional cost is made negligible, lower than 1\% of the time-marching.

\item The matrix ${\cal M}$ in (\ref{UIJ*}) depends on the subcell position of $P$ inside the mesh and on the jump conditions at $P$, involving the local geometry and the curvature of $\Gamma$ at $P$. Consequently, all these insights are incorporated in the modified value (\ref{UIJ*}), and hence in the scheme.

\item \label{EstimeVarEps} The simulations indicate that the number of grid nodes inside the disc ${\cal D}$ has a crucial influence on the stability of the immersed interface method. Here we use a constant radius $d$. Taking $r=2$, numerical experiments have shown that $d=3.2\,\Delta\,x$ is a good candidate, while $d=4.5\,\Delta\,x$ is used when $r=3$. 

\item The order $r$ plays an important role on the accuracy of the coupling between the immersed interface method and a $k$-th order scheme. If $r\geq k$, then a $k$-th order local truncation error is obtained at the irregular points. However, $r=k-1$ suffices to keep the global error to the $k$-th order \cite{GUSTAFSSON75}, and hence $r=3$ is required by the ADER 4 scheme when ${\bf S}={\bf 0}$. 

\item A special attention needs to be paid in the case of imperfect hydraulic contact (\ref{JCPerm}). Typical values of $\mathcal{K}$ range around $10^{-7}$ m/s/ Pa. From (\ref{JC0}) and (\ref{MatPerm}), it follows that numbers close to $10^7$ coexist with numbers close to 1 in the matrix ${\bf L}_1^0$. The matrices involved in steps 3 and 4 of the immersed interface method are then badly conditioned, which generates numerical instabilities during time-marching. To overcome this difficulty, we normalize the physical parameters and the unknowns in our codes. This normalization is described in appendix \ref{SecAnnexeNorme}. In practice, it is used whatever the interface conditions. 
\end{enumerate} 

%------------------------------------------------------------------------------------------
%------------------------------------------------------------------------------------------

\section{Numerical experiments}\label{SecRes}

\subsection{Physical parameters}

\begin{table}[htbp]
\begin{center}
\begin{tabular}{lll}
\hline
Saturating fluid & $\rho_f$ (kg/m$^3$)       & 1000                 \\
                 & $c$ (m/s)                 & 1500                 \\
                 & $\eta$ (Pa.s)             & 0 / $1.05\,10^{-3}$  \\
Grain            & $\rho_s$ (kg/m$^3$)       & 2690                 \\
                 & $\mu$ (Pa)                & $1.86\,10^9$         \\
%                 & $K_s$ (Pa)                &                     \\
Matrix           & $\phi$                    & 0.38                 \\ 
                 & $a$                       & 1.8                  \\
                 & $\kappa$ (m$^2$)          & $2.79\,10^{-11}$     \\
                 & $\lambda_0$ (Pa)          & $1.2\,10^8$          \\
                 & $m$ (Pa)                  & $5.34\,10^9$         \\
                 & $\beta$                   & 0.95                 \\
Interface        & $\mathcal{K}$ (m/s/Pa)       & $5.\,10^{-7}$        \\    
\hline
Phase velocities & $c_{pf}(f_0)$ (m/s)       & 2066.43              \\
                 & $c_{ps}(f_0)$ (m/s)       & 124.36               \\
				 & $c_s(f_0)$ (m/s)          & 953.05               \\
                 & $c_{pf}^\infty$ (m/s)     & 2071.85              \\
                 & $c_{ps}^\infty$ (m/s)     & 741.65               \\
                 & $c_s^\infty$ (m/s)        & 1006.32              \\ 
                 & $f_c$ (Hz)                & 1264.49              \\
\hline
\end{tabular}
\caption{Poroelastic medium $\Omega_1$: physical parameters and acoustic properties at $f_0=20$ Hz.}
\label{TabParametres}
\end{center}
\end{table}

The acoustic medium $\Omega_0$ is water ($\rho_f=1000$ kg/m$^3$, $c=1500$ m/s). The poroelastic medium $\Omega_1$ is a water saturated unconsolidated sand, whose material properties are summarized in table \ref{TabParametres} and are issued from table 3 of \cite{SIDLER10}. In some experiments, an inviscid saturating fluid is artificially considered: $\eta=0$ Pa.s, the other parameters being unchanged. It is mainly addressed here for a numerical purpose. The cases of open pores (\ref{JCOpen}), sealed pores (\ref{JCClose}), and imperfect pores (\ref{JCPerm}) are successively investigated. In the latter case, the value of hydraulic permeability $\mathcal{K}$ is given in table \ref{TabParametres}.

Once the spatial mesh sizes $\Delta\,x=\Delta\,y$ are chosen on the coarse grid, the time step follows from the CFL number in $\Omega_1$: $c_{pf}^\infty\,\Delta\,t\,/\,\Delta\,x=0.95<1$. The time evolution of the source is a combination of truncated sinusoids
\begin{equation}
h(t)=
\left\{
\begin{array}{l}
\displaystyle
\displaystyle \sum_{m=1}^4 a_m\,\sin(\beta_m\,\omega_0\,t)\quad \mbox{ if  }\, 0<t<\frac{\textstyle 1}{\textstyle f_0},\\
\\
0 \,\mbox{ otherwise}, 
\end{array}
\right.
\label{JKPS}
\end{equation}
where $\beta_m=2^{m-1}$, $\omega_0=2\pi\,f_0$; the coefficients $a_m$ are: $a_1=1$, $a_2=-21/32$, $a_3=63/768$, $a_4=-1/512$, ensuring $C^6$ smoothness of the solution. Two types of sources and boundary conditions are considered:
\begin{itemize}
\item no source term $f_p$ in (\ref{LCfluide}), but an incident plane wave in $\Omega_0$ as initial conditions:
\begin{equation}
{\bf U}(x,\,y,\,t_0)=-
\left(   
\begin{array}{ccc}
\displaystyle
\frac{\textstyle \cos \theta}{\textstyle c}\\
[8pt]
\displaystyle
\frac{\textstyle \sin \theta}{\textstyle c}\\
[8pt]
\displaystyle
\rho_f
\end{array}
\right)\,
h\left(t_0-\frac{\textstyle x\,\cos\theta+y\,\sin\theta}{\textstyle c}\right),
\label{OndePlane}
\end{equation}
where $\theta$ is the angle between the wavevector and the $x$-axis, and $t_0$ adjusts the location of the plane wave in $\Omega_0$. In section \ref{SecResT1}, the diffracted plane waves are computed exactly, and they are enforced numerically on the edges of the computational domain. In section \ref{SecResT2}, periodic computational edges are imposed along $y$-direction, and hence the incident acoustic wave does not need to be enforced;
\item null initial conditions, but a varying source term in (\ref{LCfluide})
\begin{equation}
f_p=\delta(x-x_s)\,\delta(y-y_s)\,h(t)
\label{PtSource}
\end{equation} 
that generates cylindrical waves. The size of the domain and the duration of the simulations are defined so that no special attention is required to simulate outgoing waves, for instance with Perfectly-Matched Layers \cite{MARTIN08}. 
\end{itemize}

%------------------------------------------------------------------------------------------

\subsection{Test 1: plane wave on a plane interface}\label{SecResT1}

\begin{figure}[htbp]
\begin{center}
\begin{tabular}{cc}
(a) & (b)\\
\includegraphics[scale=0.36]{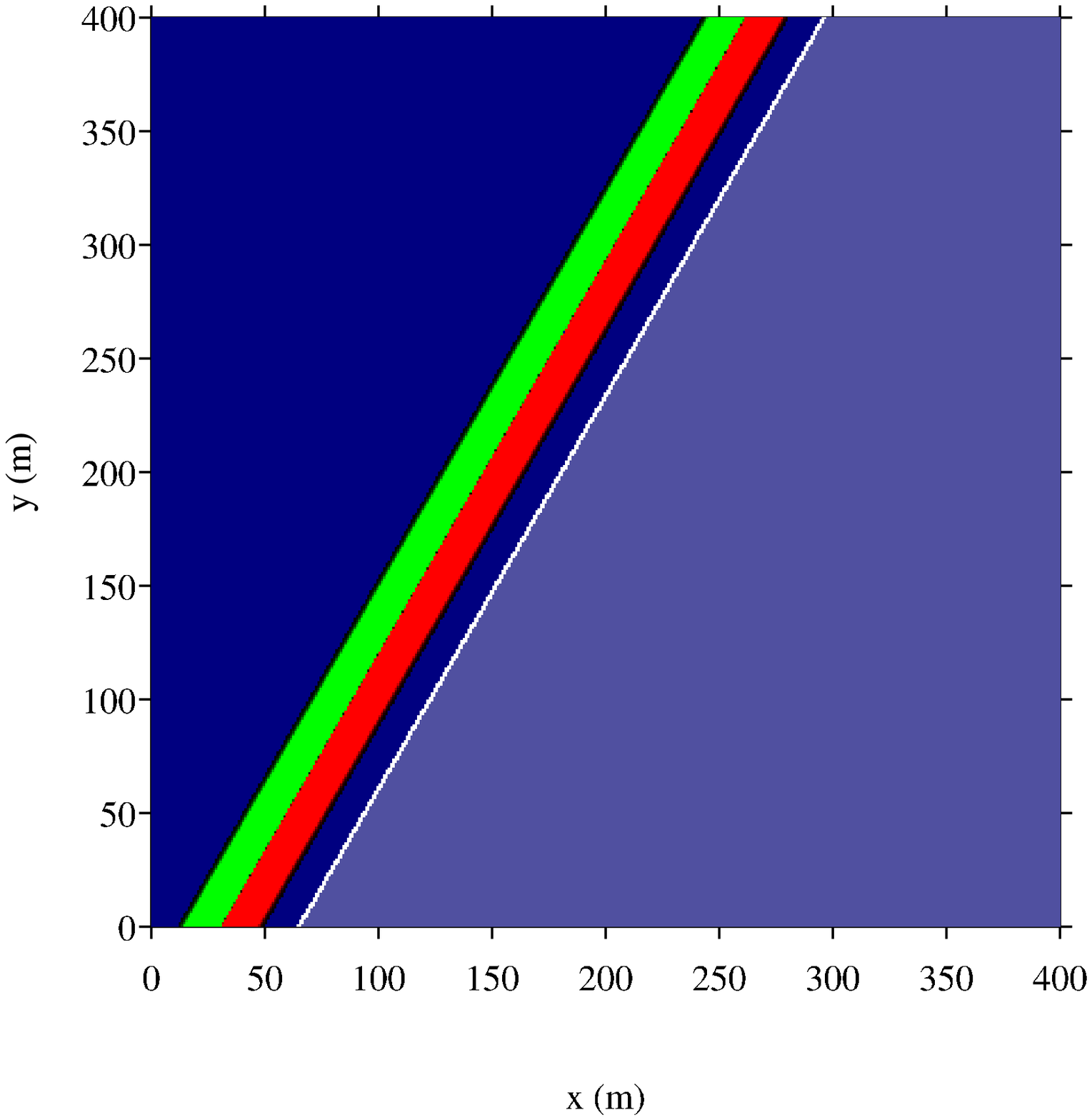}&
\includegraphics[scale=0.36]{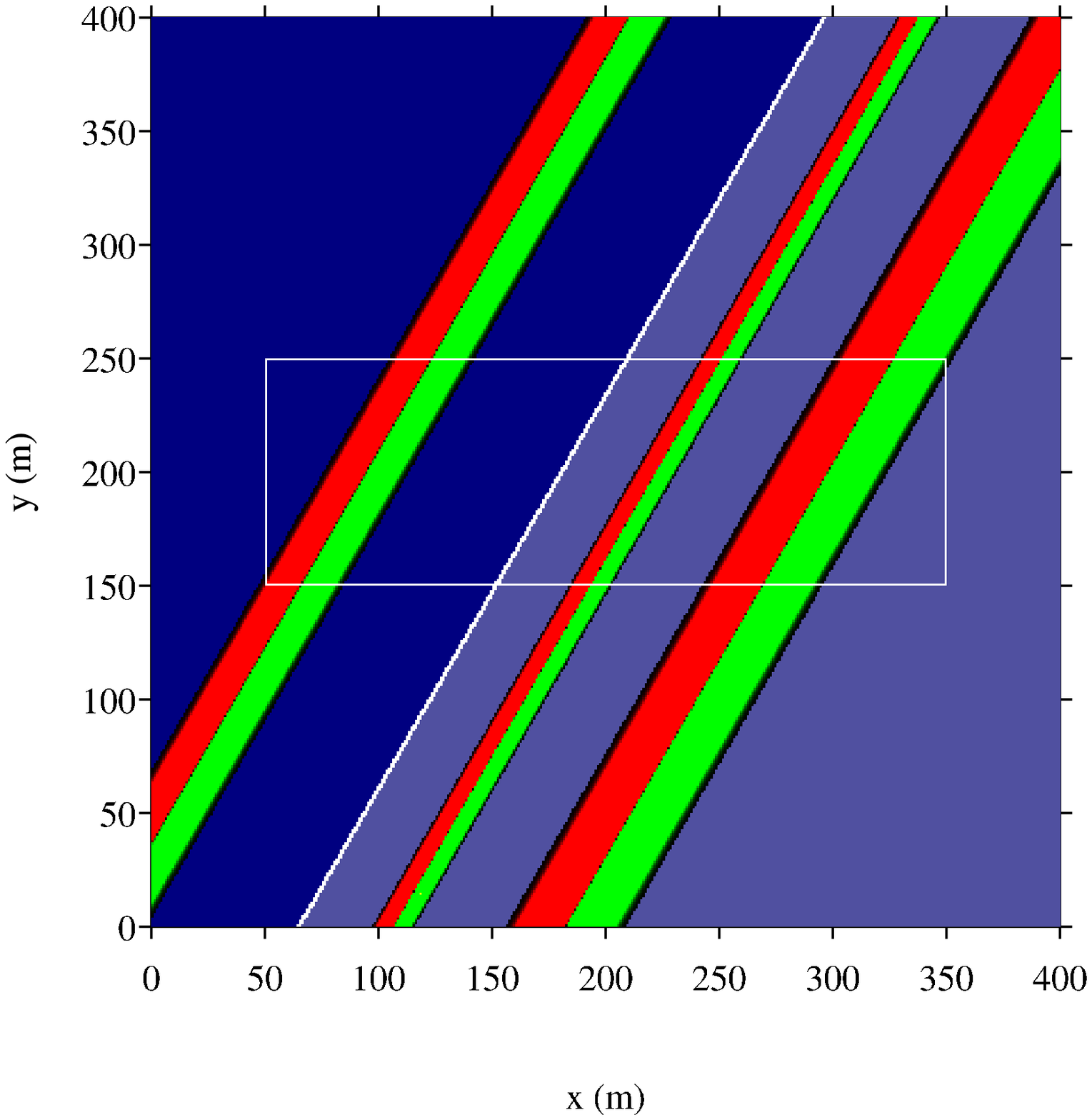}
\end{tabular}
\end{center}
\vspace{-0.8cm}
\caption{test 1. Snapshots of $p$ at the initial instant (a) and at the instant of measure, with open pore conditions (b). The white rectangle denotes the zone where convergence errors are measured.}
\label{FigTest1Cartes}
\end{figure}

As a first test, we consider a domain $[0,\,400]$ m$^2$ cut by a plane interface $\Gamma$ with slope 60 degrees. An incident plane wave propagates in the fluid, with $t_0=0.03$ s and $\theta=-30$ degrees (\ref{OndePlane}). Consequently, the incident wave crosses the interface normally, leading to a 1-D configuration (figure \ref{FigTest1Cartes}-a); from a numerical point of view, however, the problem is fully bidimensional. The advantage of such a 1-D configuration is that each diffracted wave has interacted with the interface and is consequently very sensitive to the discretization of the interface conditions (\ref{JC}). The central frequency $f_0=40$ Hz is much smaller than the critical Biot frequency: see (\ref{Fc}) and table \ref{TabParametres}.    in (\ref{JKPS}) is 

\begin{figure}[htbp]
\begin{center}
\begin{tabular}{ccc}
(a) & (b)\\
\includegraphics[scale=0.34]{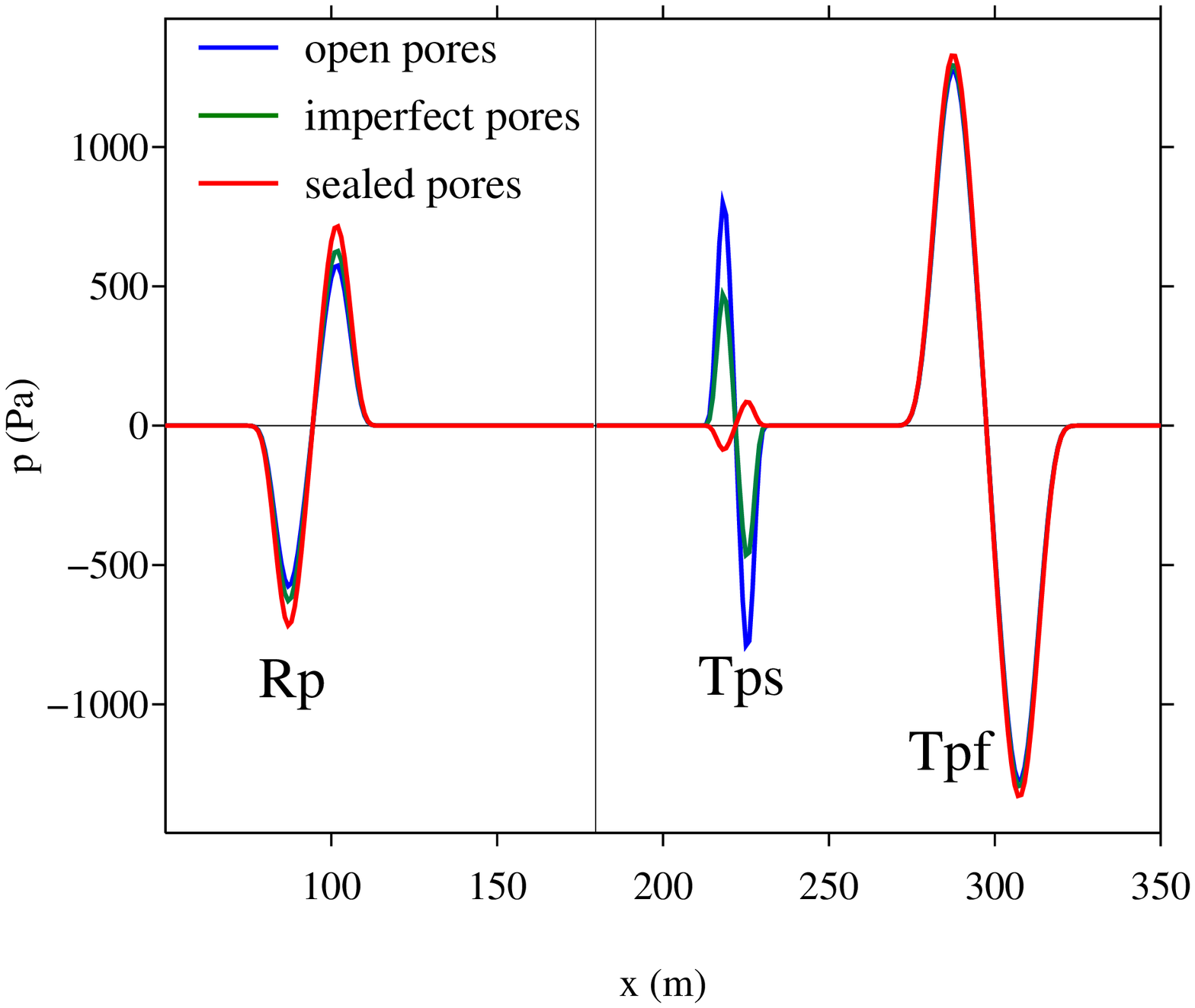}&
\includegraphics[scale=0.34]{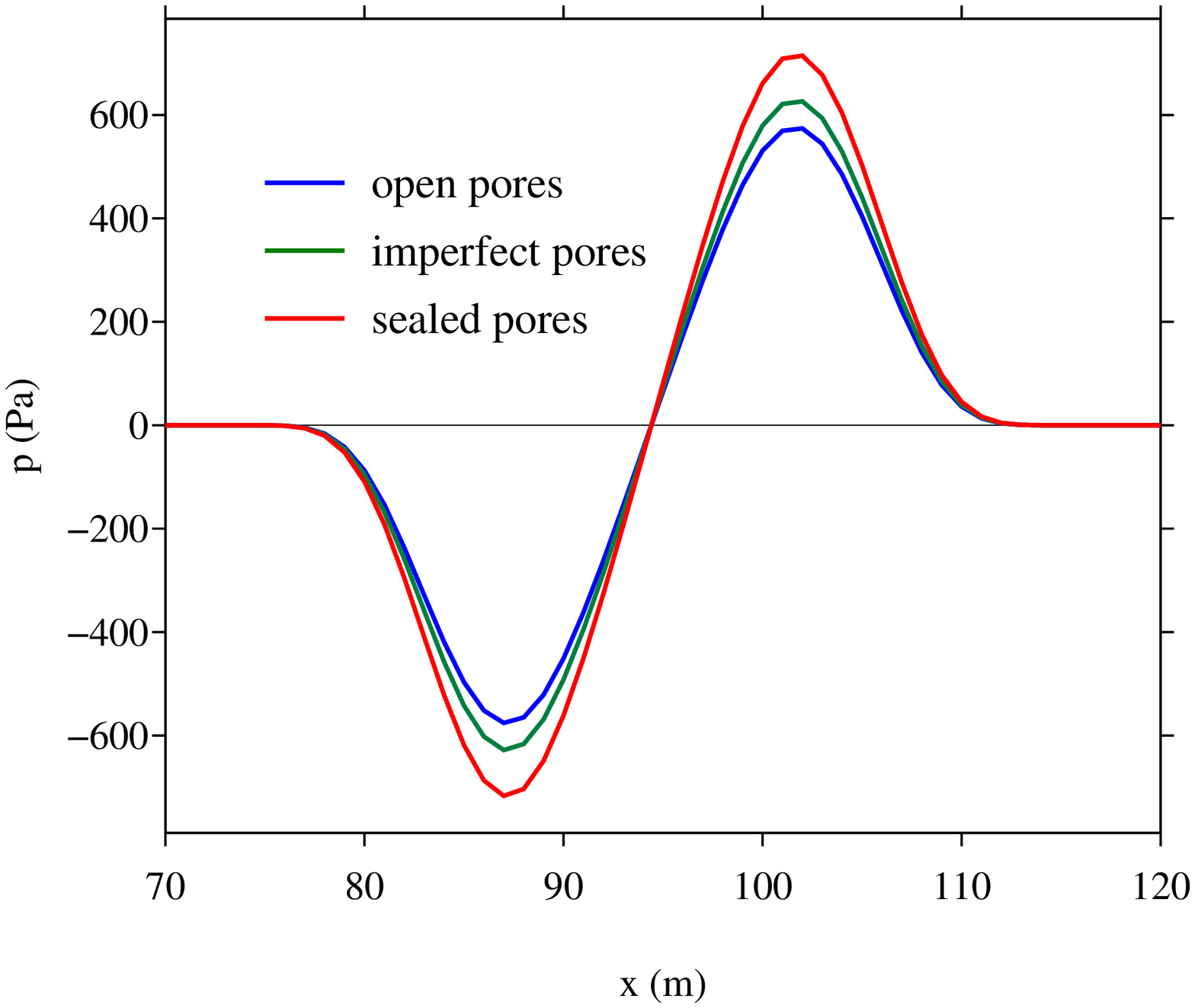}
\end{tabular}
\end{center}
\vspace{-0.8cm}
\caption{test 1. Slices of $p$ at the final instant, for various interface conditions. (a): reflected acoustic wave $R_p$, transmitted fast compressional wave $T_{pf}$, transmitted slow compressional wave $T_{ps}$. (b): zoom around the wave reflected in the fluid domain ($Rp$).}
\label{FigTest1Compare}
\end{figure}

We consider an inviscid saturating fluid ($\eta=0$ Pa.s): as a consequence, exact solutions are computed very accurately without Fourier synthesis, and splitting errors of the scheme are avoided (section \ref{SecNumLC}). The computations are done on a uniform grid of $400^2$ points, during 150 time steps. Comparisons with the exact values of the pressure $p$ are done on the sub-domain $[50,\,350]\mbox{ m} \times[150,\,250]$ m, in order to avoid spurious effects induced by the edges of the computational domain (figure \ref{FigTest1Cartes}-b). One observes the reflected acoustic wave, the transmitted fast wave and the transmitted slow compressional waves (no shear wave is generated in 1D). 

\begin{table}[htbp]
\begin{center}
\begin{small}
\begin{tabular}{l|l|ll|ll|ll}
interface condition & $N$    & $r=1$  & order & $r=2$ & order & $r=3$ & order    \\
\hline
open pores 
& 400  & $4.987\,10^{0}$ & {\bf -}    & $2.855\,10^{0}$  & {\bf -}     & $2.008\,10^{0}$  & {\bf -}\\
& 800  & $1.154\,10^{0}$ & {\bf 2.212}& $4.042\,10^{-1}$ & {\bf 2.820} & $2.207\,10^{-1}$ & {\bf 3.186}\\
& 1200 & $4.997\,10^{-1}$& {\bf 2.064}& $1.160\,10^{-1}$ & {\bf 3.079} & $5.052\,10^{-2}$ & {\bf 3.636}\\
& 1600 & $2.791\,10^{-1}$& {\bf 2.025}& $4.728\,10^{-2}$ & {\bf 3.120} & $1.661\,10^{-2}$ & {\bf 3.867}\\
& 2000 & $1.788\,10^{-1}$& {\bf 1.996}& $2.367\,10^{-2}$ & {\bf 3.101} & $6.870\,10^{-3}$ & {\bf 3.956}\\
& 2400 & $1.239\,10^{-1}$& {\bf 2.012}& $1.345\,10^{-2}$ & {\bf 3.100} & $3.318\,10^{-3}$ & {\bf 3.992}\\
\hline
sealed pores 
& 400  & $5.226\,10^{0}$ & {\bf -}    & $7.214\,10^{-1}$ & {\bf -}     & $6.147\,10^{-1}$ & {\bf -}  \\
& 800  & $1.467\,10^{0}$ & {\bf 1.833}& $8.656\,10^{-2}$ & {\bf 3.059} & $4.886\,10^{-2}$ & {\bf 3.653}\\
& 1200 & $6.704\,10^{-1}$& {\bf 1.931}& $2.449\,10^{-2}$ & {\bf 3.114} & $1.031\,10^{-2}$ & {\bf 3.837}\\
& 1600 & $3.809\,10^{-1}$& {\bf 1.965}& $1.002\,10^{-2}$ & {\bf 3.106} & $3.346\,10^{-3}$ & {\bf 3.912}\\
& 2000 & $2.449\,10^{-1}$& {\bf 1.979}& $5.029\,10^{-3}$ & {\bf 3.089} & $1.389\,10^{-3}$ & {\bf 3.940}\\
& 2400 & $1.706\,10^{-1}$& {\bf 1.983}& $2.868\,10^{-3}$ & {\bf 3.080} & $6.764\,10^{-4}$ & {\bf 3.947}\\
\hline
imperfect pores 
& 400  & $4.826\,10^{0}$ & {\bf -}    & $1.739\,10^{0}$  & {\bf -}     & $1.262\,10^{}$  & {\bf -} \\
& 800  & $1.200\,10^{0}$ & {\bf 2.008}& $2.412\,10^{-1}$ & {\bf 2.850} & $1.287\,10^{-1}$ & {\bf 3.294}\\
& 1200 & $5.233\,10^{-1}$& {\bf 2.047}& $6.882\,10^{-2}$ & {\bf 3.093} & $2.913\,10^{-2}$ & {\bf 3.664}\\
& 1600 & $2.916\,10^{-1}$& {\bf 2.033}& $2.792\,10^{-2}$ & {\bf 3.136} & $9.517\,10^{-3}$ & {\bf 3.889}\\
& 2000 & $1.858\,10^{-1}$& {\bf 2.020}& $1.389\,10^{-2}$ & {\bf 3.129} & $3.922\,10^{-3}$ & {\bf 3.973}\\
& 2400 & $1.285\,10^{-1}$& {\bf 2.022}& $7.891\,10^{-3}$ & {\bf 3.101} & $1.891\,10^{-3}$ & {\bf 4.001}
\end{tabular}
\end{small} 
\end{center}
\vspace{-0.5cm}
\caption{test 1. Error measurements and convergence rate in $l_2$ norm, for various interface conditions. Linear $(r=1)$, quadratic $(r=2)$ or cubic $(r=3)$ immersed interface method. Measures are done after $3\,N\,/\,8$ time steps.}
\label{TabTest1Convergence}
\end{table}

The influence of the interface conditions on the diffracted waves is illustrated in figure \ref{FigTest1Compare}. The exact value of $p$ at the final instant is shown at $y=200$ m for the conditions (\ref{JCOpen}), (\ref{JCClose}) and (\ref{JCPerm}). The main difference between these three cases is observed in the transmitted slow wave. In real experiments, however, only the reflected acoustic wave is measured: a zoom on this wave is given in figure \ref{FigTest1Compare}-b. The hydraulic permeability $\mathcal{K}=5.\,10^{-7}$ leads to an intermediate regime between open and sealed pores. If $\mathcal{K}\geq 10^{-5}$, the results (not shown here) cannot be distinguished from those obtained with open pores. In the same way, results obtained with $\mathcal{K}\leq 10^{-8}$ cannot be distinguished from the sealed pores. 

\begin{figure}[htbp]
\begin{center}
\begin{tabular}{cc}
open pores (a) & open pores (b)\\
\includegraphics[scale=0.34]{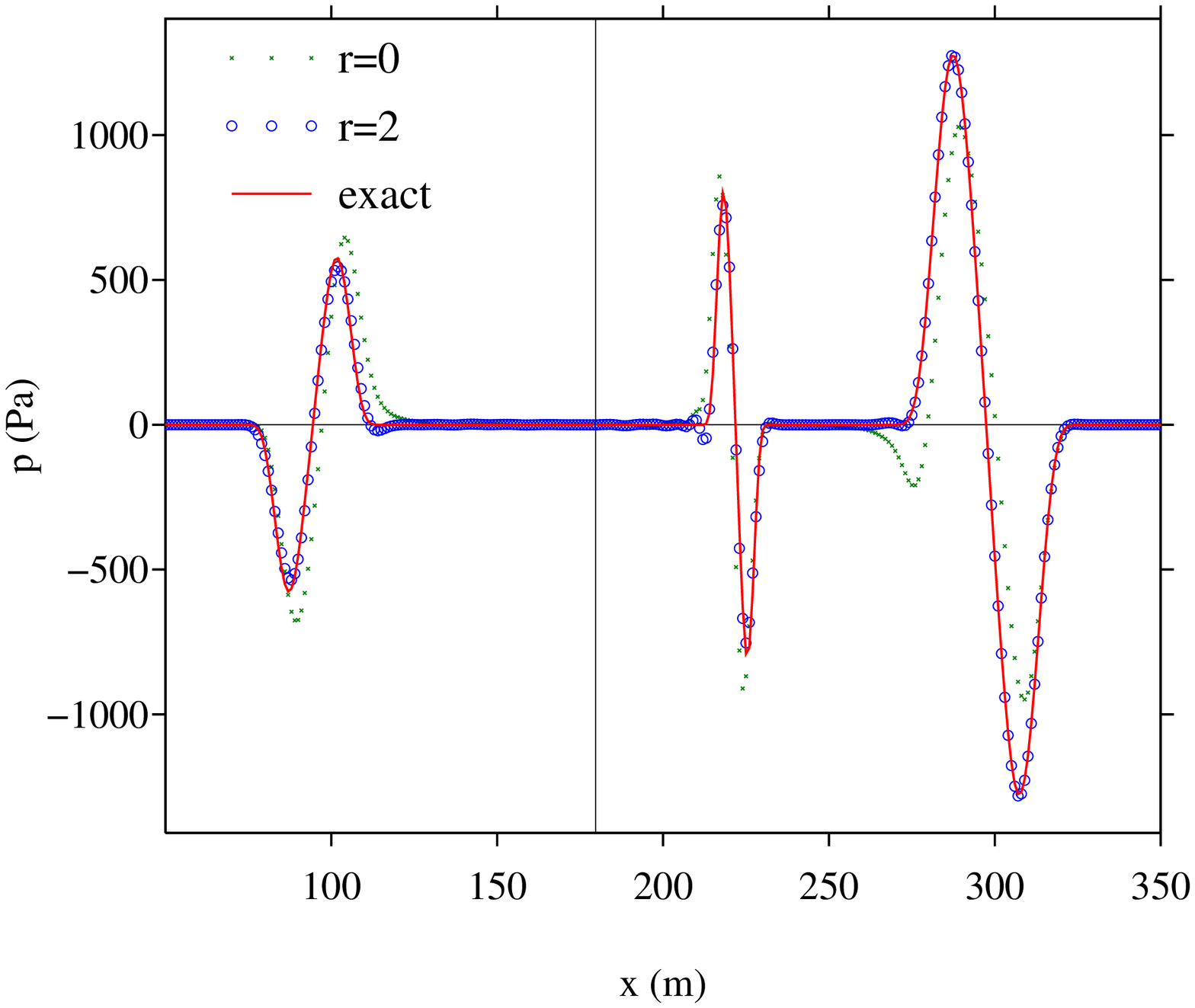}&
\includegraphics[scale=0.34]{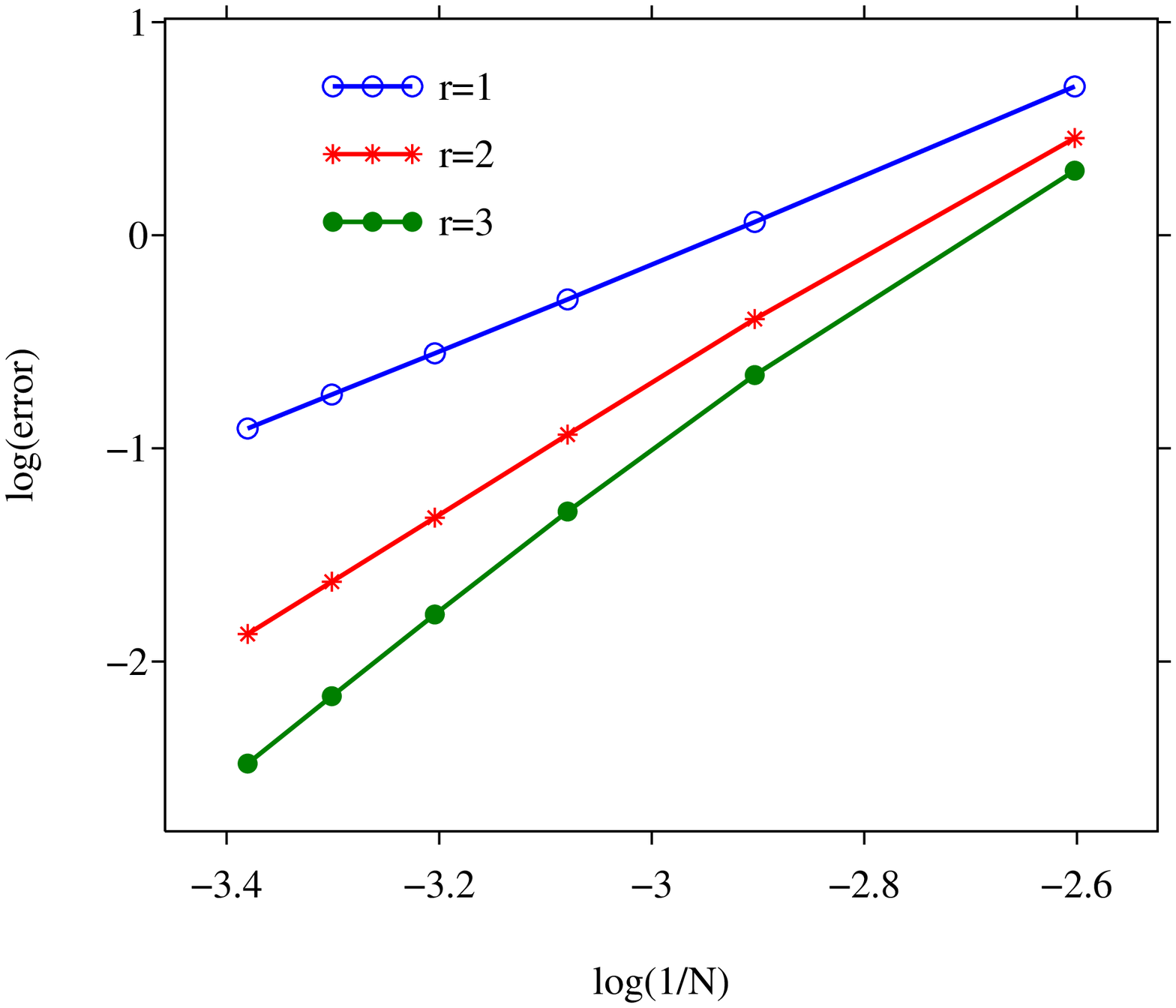}\\
sealed pores (c) & sealed pores (d)\\
\includegraphics[scale=0.34]{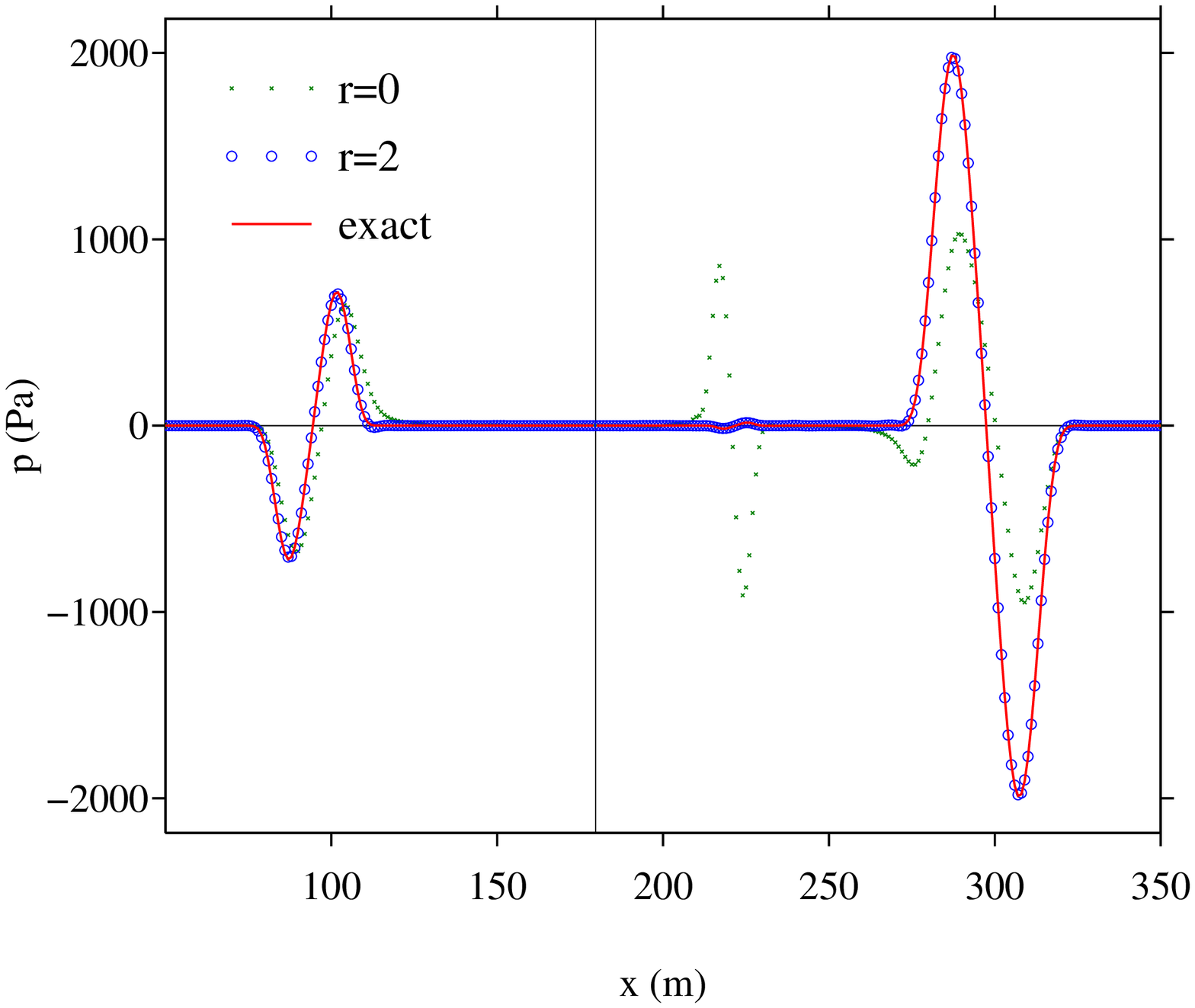}&
\includegraphics[scale=0.34]{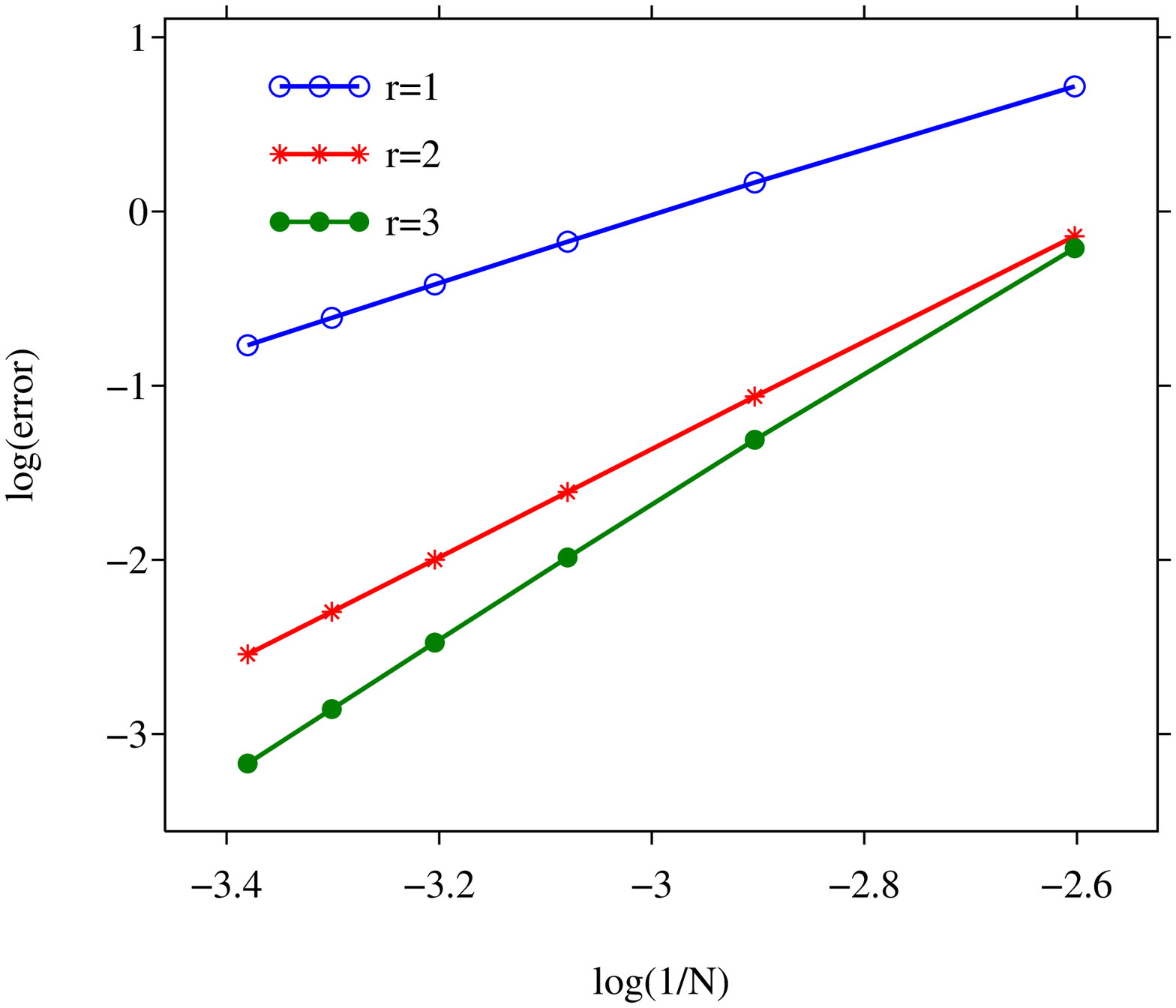}\\
imperfect pores (e) & imperfect pores (f)\\
\includegraphics[scale=0.34]{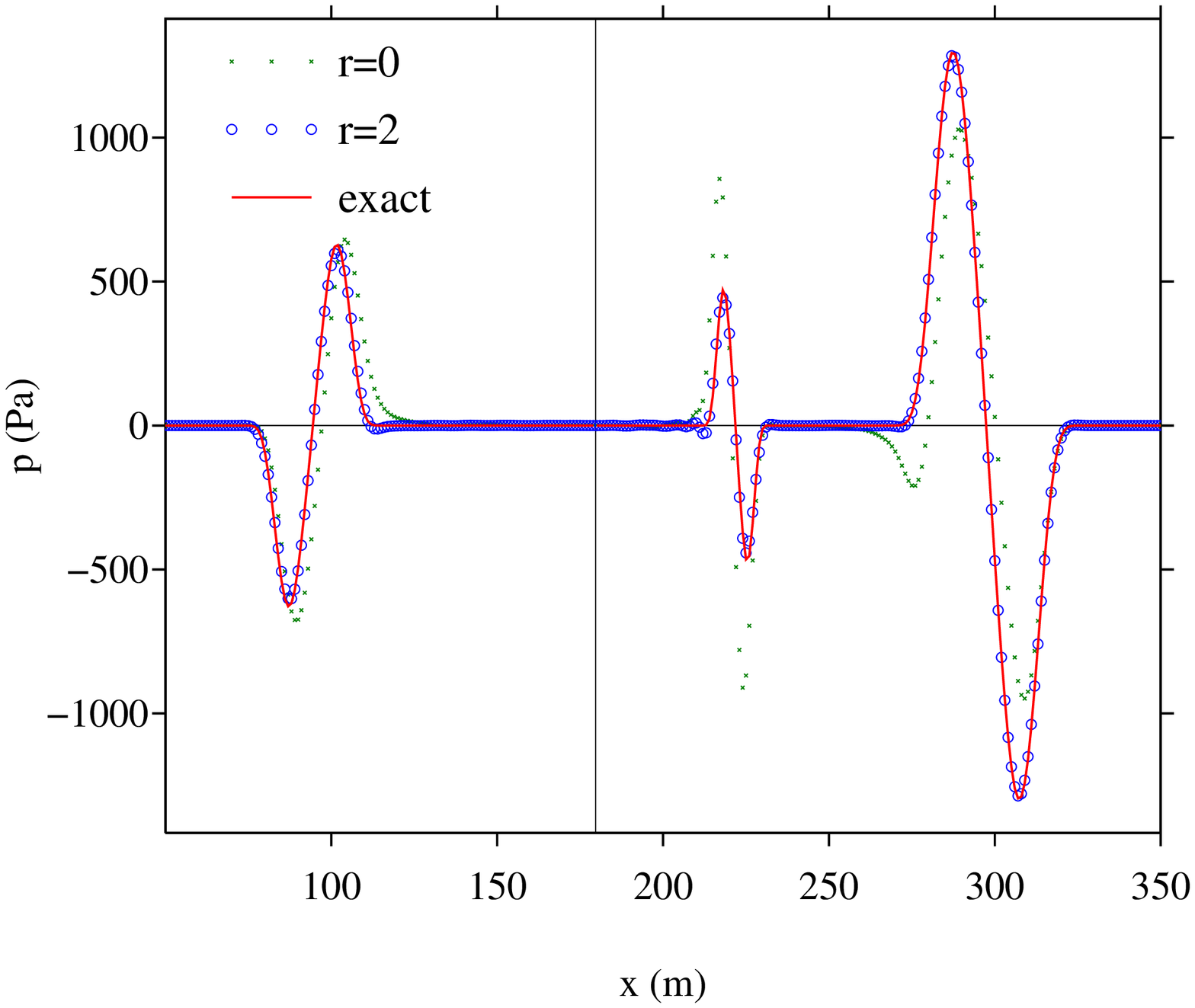}&
\includegraphics[scale=0.34]{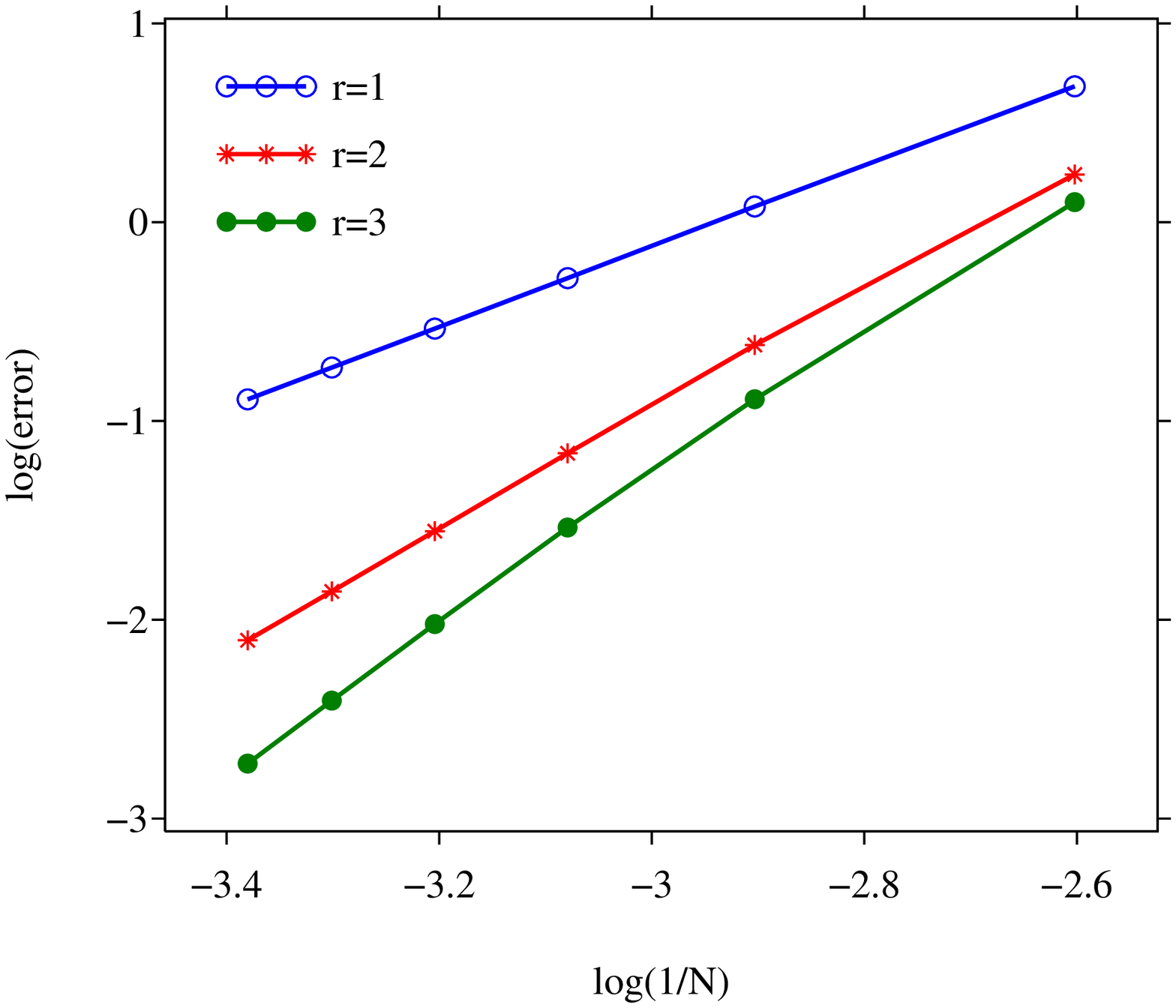}
\end{tabular}
\end{center}
\vspace{-0.8cm}
\caption{test 1. Left row: comparisons between exact values and numerical values of $p$. Right row: errors measured in $l_2$ norm versus the number of grid nodes, with various order $r$ of the immersed interface method.}
\label{FigTest1Ordre}
\end{figure}

How accurate is the discretization of the interface conditions is assessed through comparisons with the analytical solutions (figure \ref{FigTest1Ordre}). In the left row, such comparisons are proposed in two cases; $r=0$ means that no numerical treatment is done along the interface, and the numerical solution does not converge towards the exact one. On the contrary, excellent agreement is observed when $r=2$ is used (section \ref{SecNumIIM}).

Error measurements on successive refined grids are given in table \ref{TabTest1Convergence}. Convergence rates are drawn on the right row of figure \ref{FigTest1Ordre}. Various values of the order $r$ of the immersed interface method are investigated. As stated in section \ref{SecNumIIM}, fourth-order accuracy is maintained if third-order extrapolations ($r=3$) are used in the immersed interface method. From now on, all the simulations are performed with $r=3$.

\begin{figure}[htbp]
\begin{center}
\begin{tabular}{cc}
(a) & (b)\\
\includegraphics[scale=0.34]{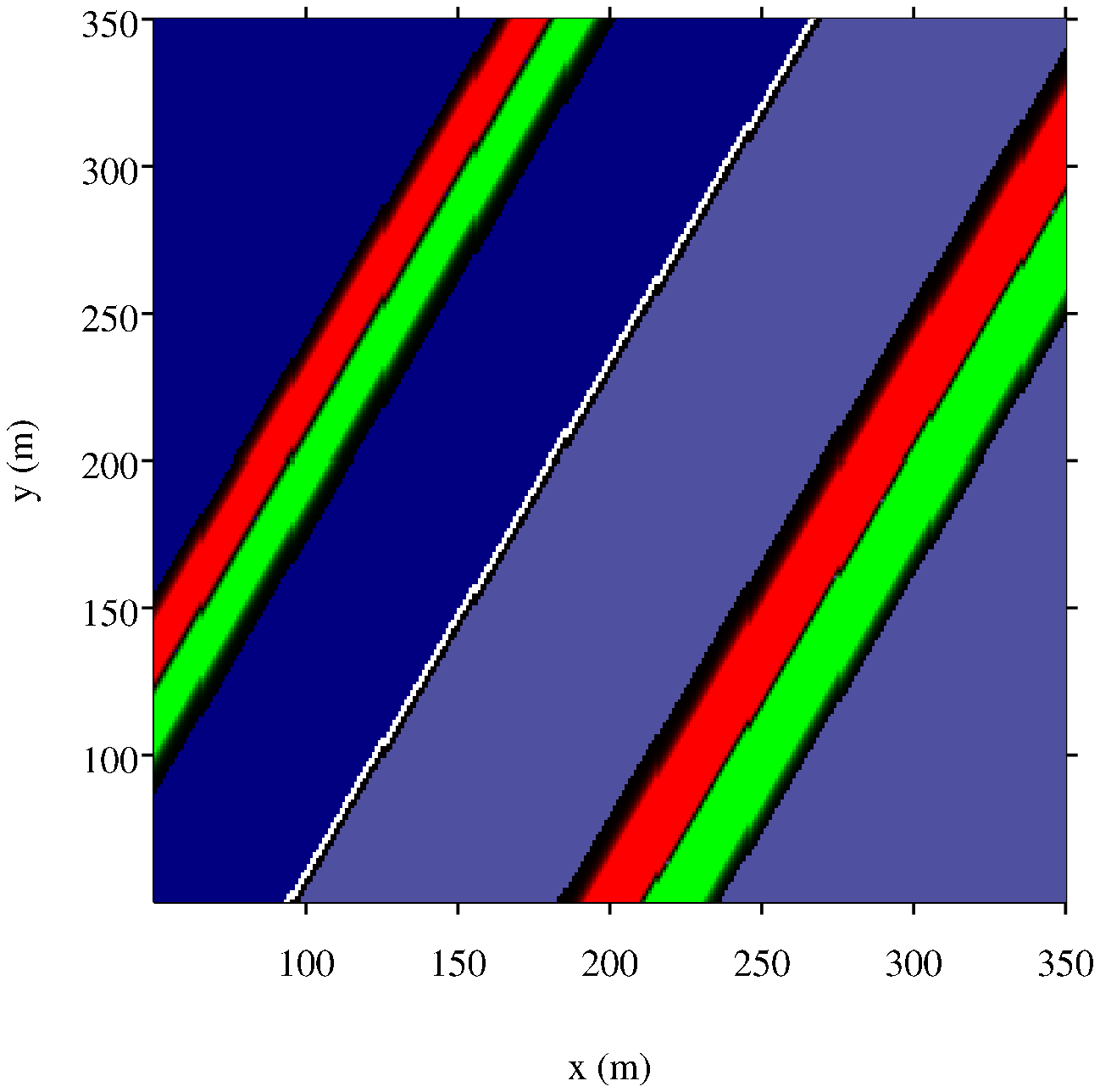}&
\includegraphics[scale=0.34]{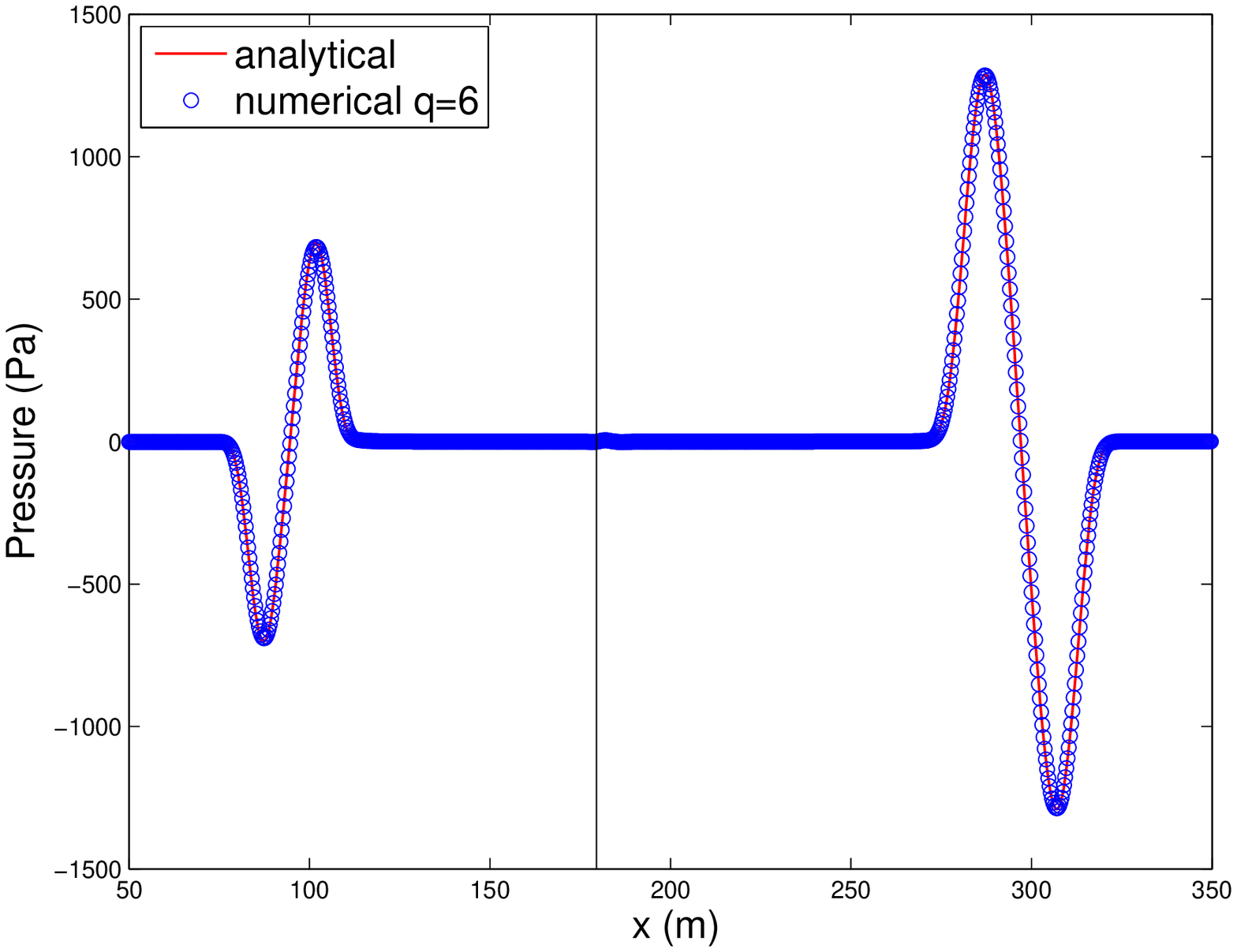}\\
(c) & (d)\\
\includegraphics[scale=0.34]{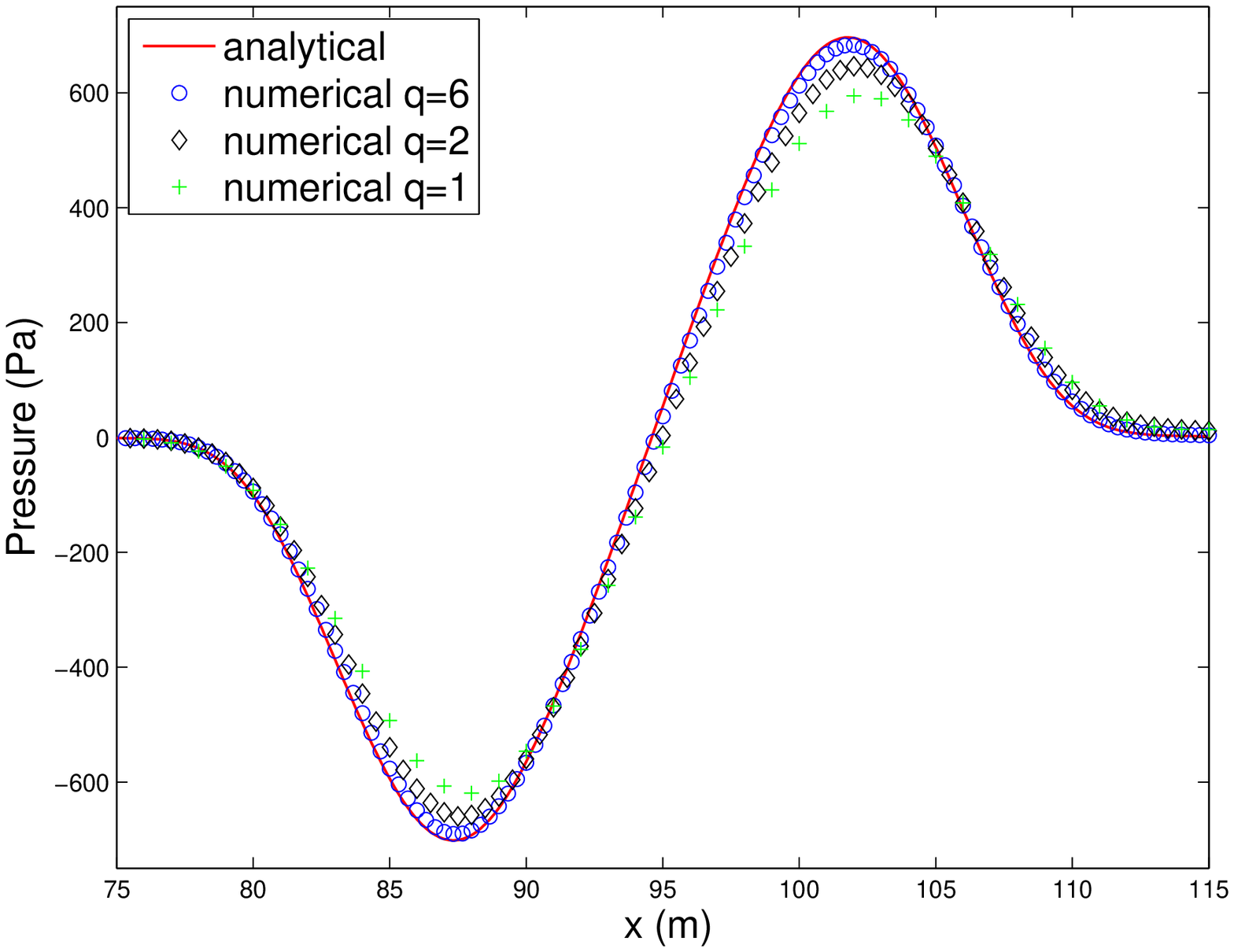}&
\includegraphics[scale=0.34]{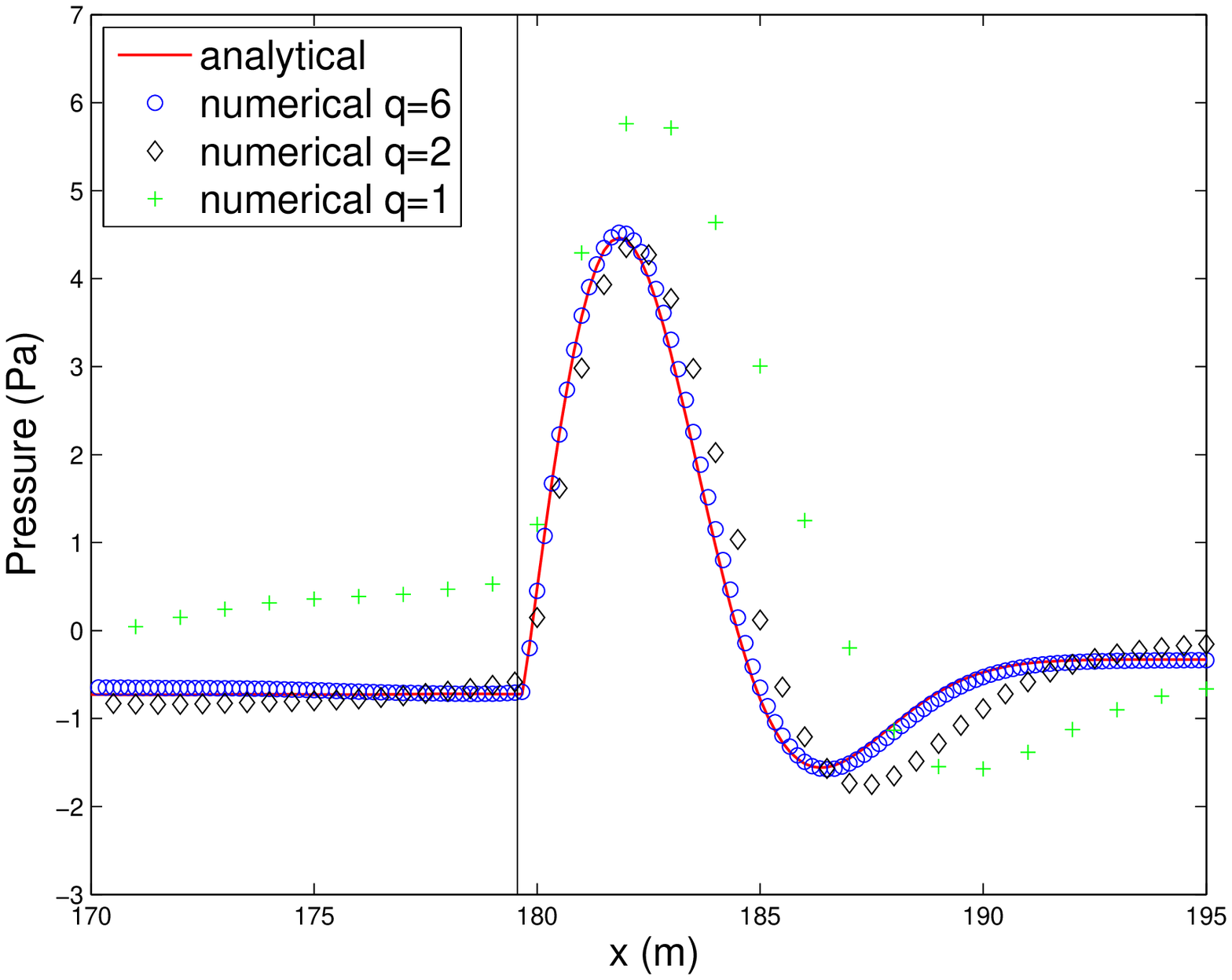}
\end{tabular}
\end{center}
\vspace{-0.8cm}
\caption{test 1, with a viscous saturating fluid ($\eta=1.05\,10^{-3}$ Pa.s) and open pore conditions. Snapshot (a) and slice (b) of the exact value of $p$ at the final instant; zooms on the reflected acoustic wave (c) and on the transmitted slow compressional wave (d), for various grid sizes. Vertical  line represents the location of the fluid / porous interface.}
\label{FigTest1Visqueux}
\end{figure}

In figure \ref{FigTest1Visqueux}, we take into account the viscosity of the the saturating fluid ($\eta=1.05\,10^{-3}$ Pa.s), the other parameters being unchanged (open pore conditions are considered). In this case, the slow compressional wave is a static mode that remains where it is generated, close to the interface in the porous medium, and hence a much finer grid is required (section \ref{SecNumAMR}). The exact solution is computed by Fourier synthesis on 256 modes, with a frequency step $\Delta\,f=1.6$ Hz.

In comparison with figures \ref{FigTest1Cartes}-(b) and \ref{FigTest1Ordre}-(a), the transmitted slow compressional wave cannot be seen in figure \ref{FigTest1Visqueux}-(b). Zooms around the reflected acoustic wave (c) and around the transmitted slow compressional wave (d) are proposed for various grids. If the coarse grid of the inviscid case is used ($q=1$), then large errors are observed compared with the exact solution. It is a consequence of the too crude discretization of small-scale slow wave (d). On the contrary, if a sufficiently fine grid is used ($q=6$),  an excellent agreement is obtain between  numerical and analytical solutions.

%------------------------------------------------------------------------------------------
%------------------------------------------------------------------------------------------

\subsection{Test 2: plane wave on a circular interface}\label{SecResT2}

A circular interface $\Gamma$ of radius $100$ m is centered at $(0,\,0)$ in the domain $[-600, 600]\,\rm{m}^2$ (figure \ref{FigTest2Init}). This configuration is relevant to examine the discretization of an interface with constant non-zero curvature. The cylinder is filled with the porous medium $\Omega_1$, while the acoustic medium $\Omega_0$ lies outside. The computational domain is discretized on $800^2$ points, leading to $\Delta\,x=\Delta\,y=1.5$ m. A locally refined mesh $[-110, 110]\,\rm{m}^2$ is used around the interface in order to compute accurately the different wave conversions. As in test 1, the source is an acoustic plane wave initially in $\Omega_0$, with $\theta=0$ degree (\ref{OndePlane}). The central frequency in (\ref{JKPS}) is $f_0=20$ Hz. The initial conditions are illustrated in figure \ref{FigTest2Init}.

\begin{figure}[htbp]
\begin{center}
\begin{tabular}{cc}
(a) & (b)\\
\includegraphics[scale=0.4]{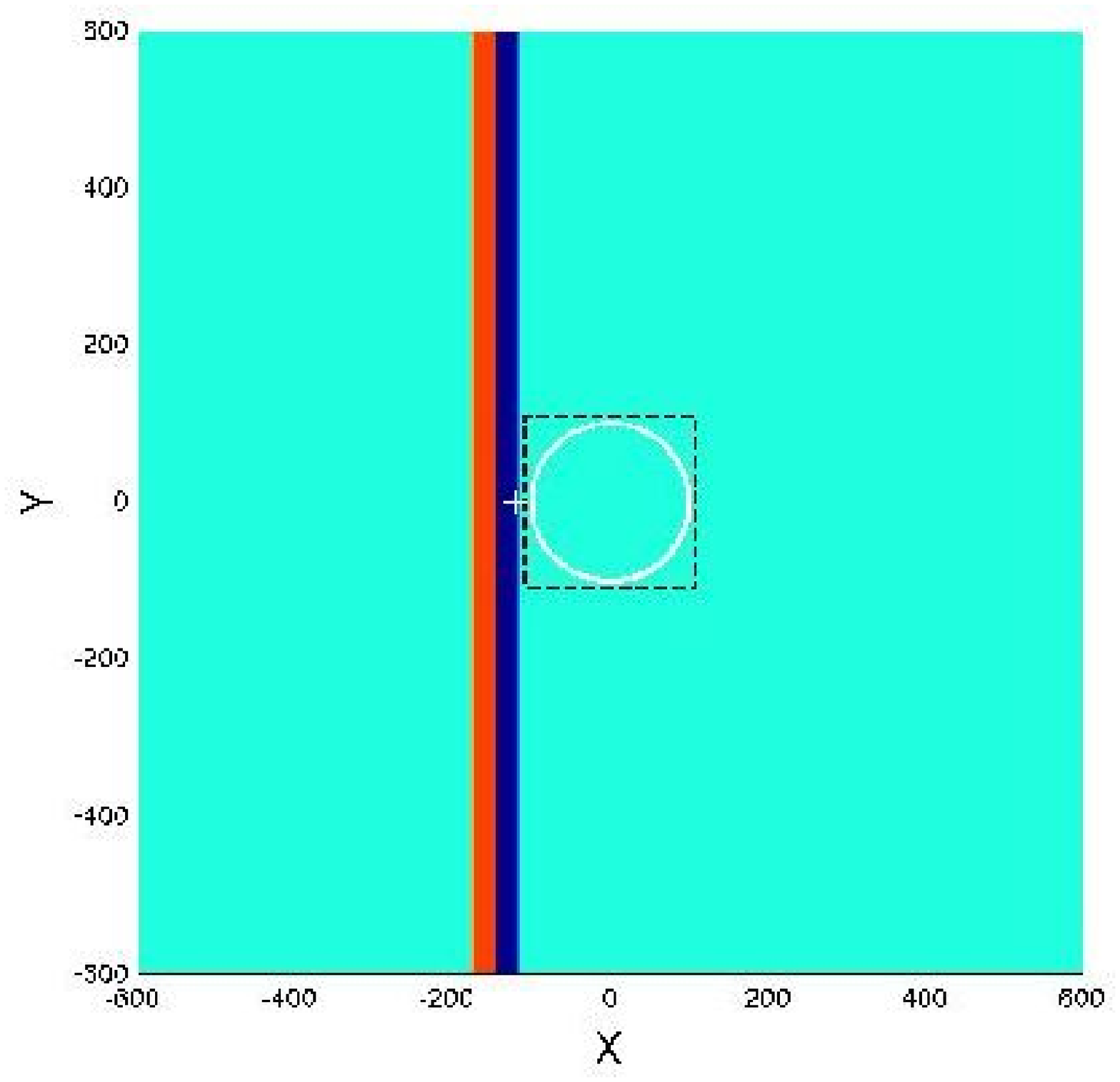}&
\includegraphics[scale=0.4]{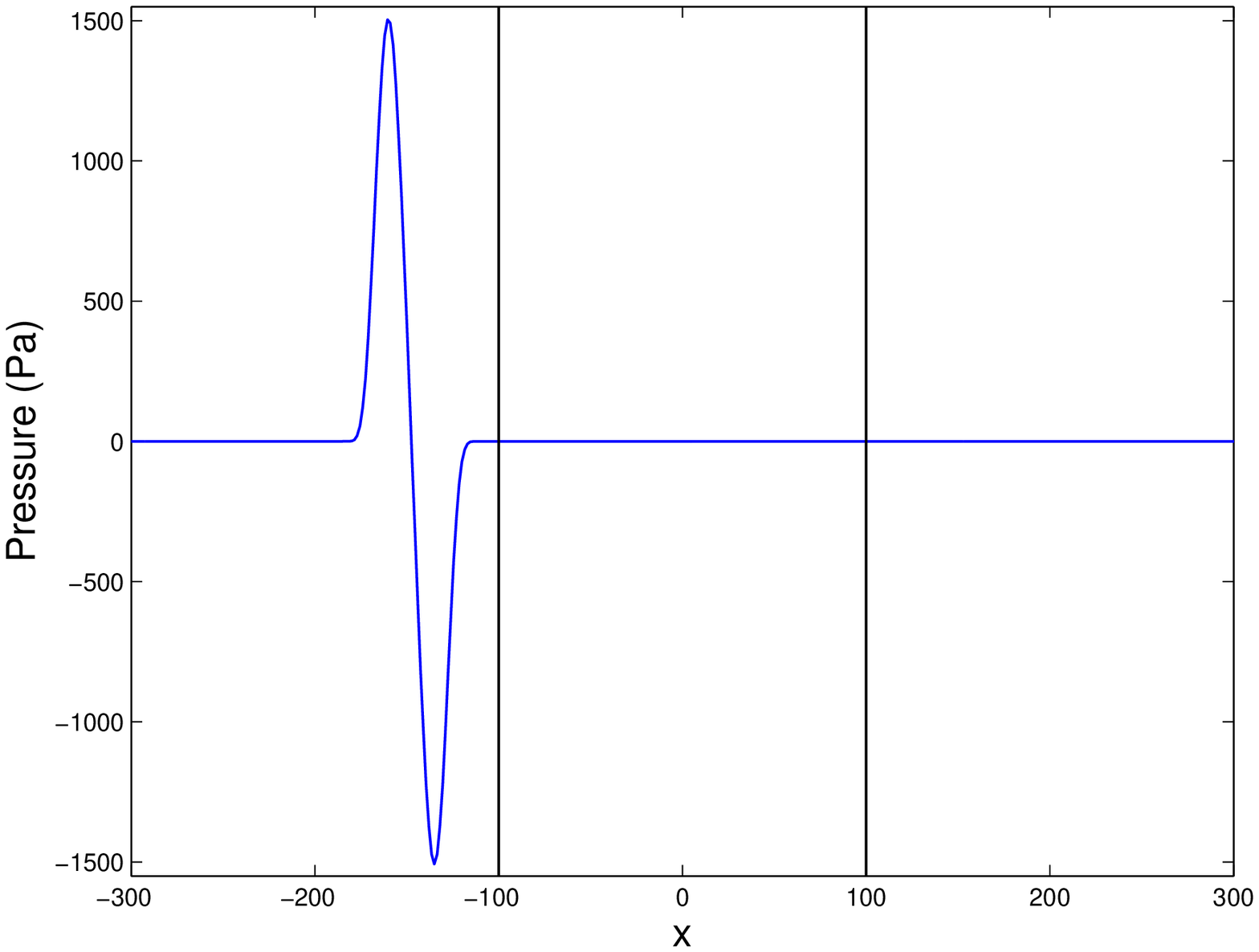}
\end{tabular}
\end{center}
\vspace{-0.8cm}
\caption{test 2. Initial values of $p$: snapshot (a) and slice along the $x$-axis, at $y=0$ m (b). The dotted square in (a) represents the frontiers of the refined mesh. The vertical lines in (b) denote the location of interfaces.}
\label{FigTest2Init}
\end{figure}

\begin{figure}[htbp]
\begin{center}
\begin{tabular}{cc}
open pores & open pores\\
\includegraphics[scale=0.4]{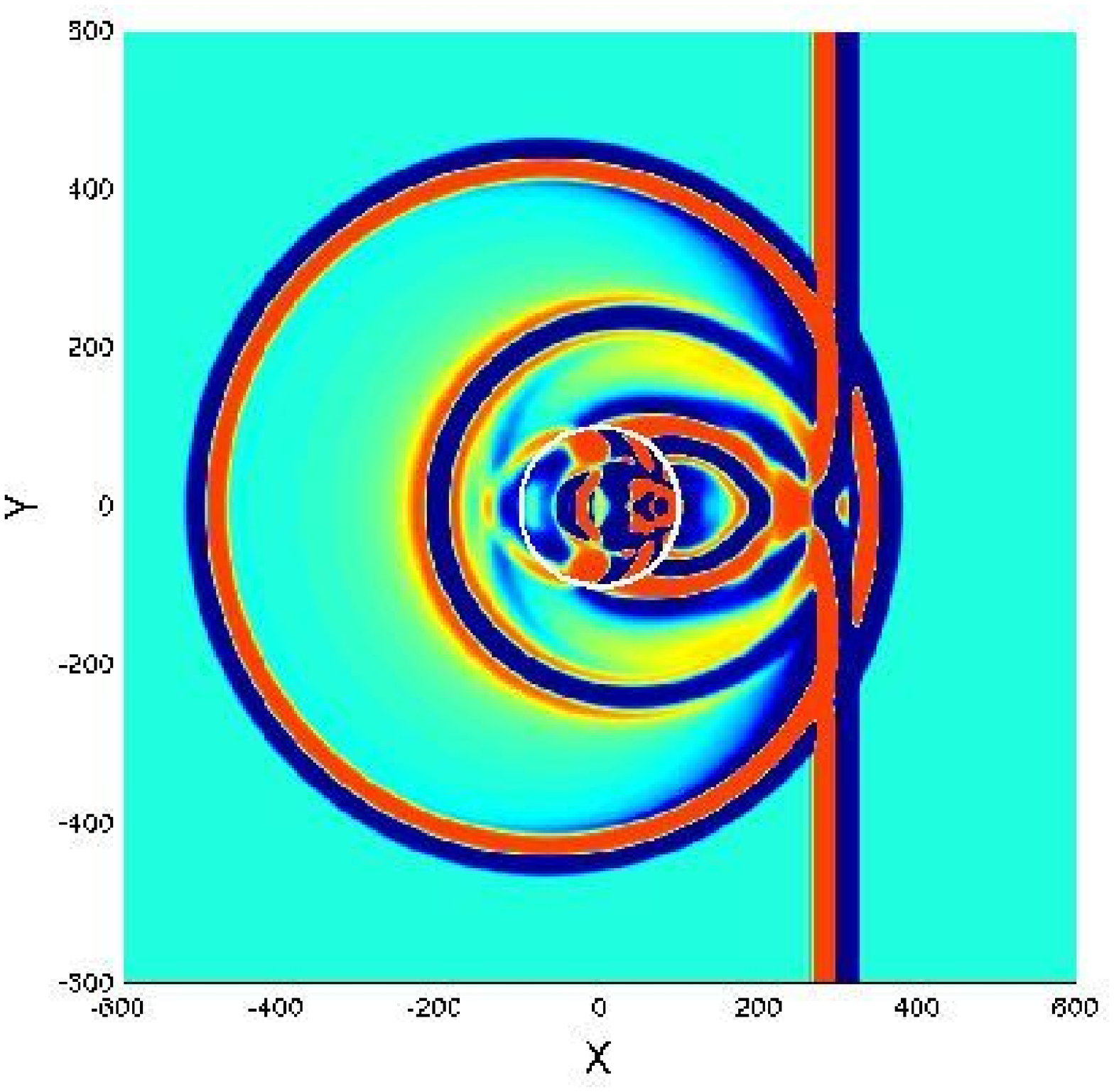}&
\includegraphics[scale=0.4]{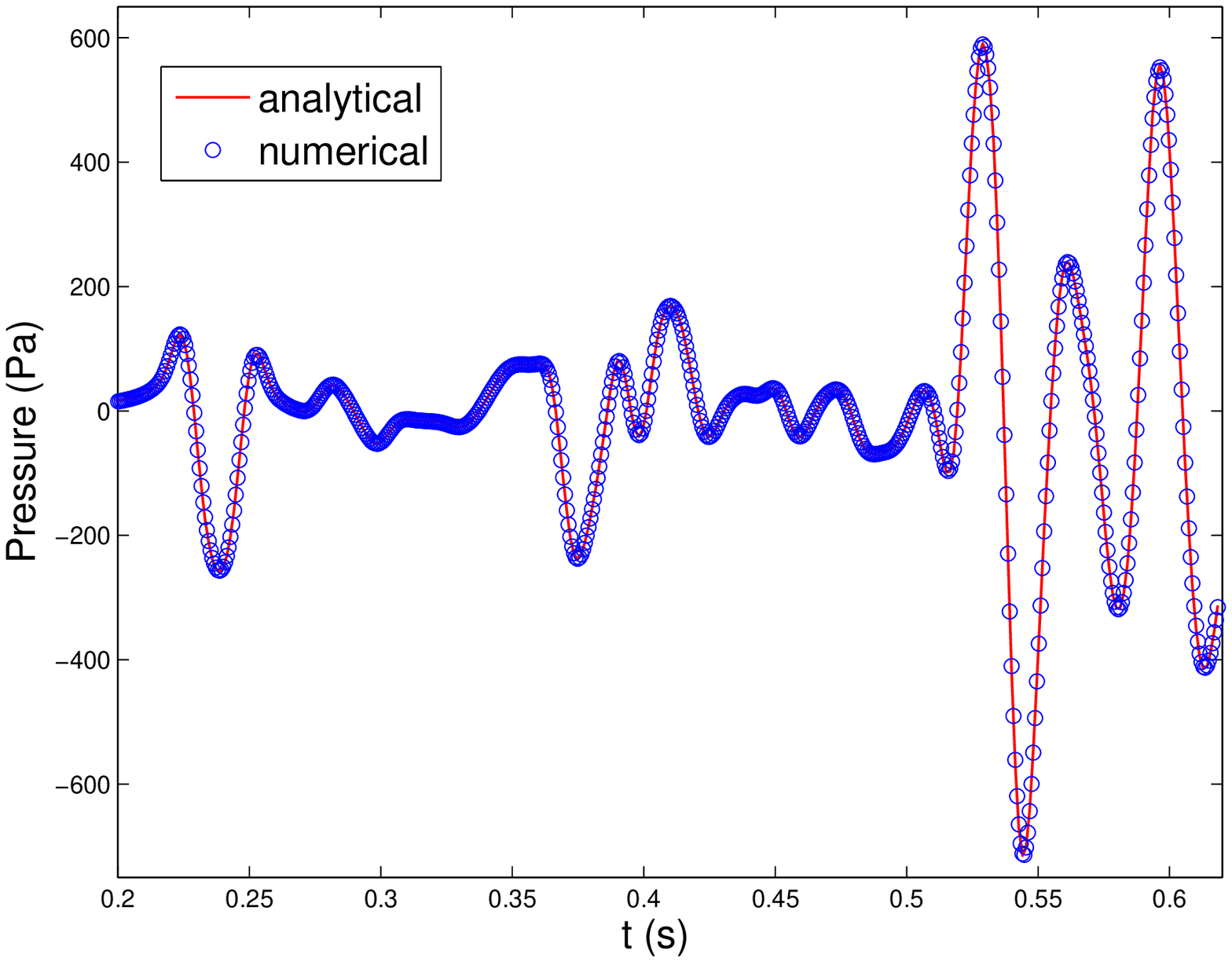}\\
sealed pores & sealed pores\\
\includegraphics[scale=0.4]{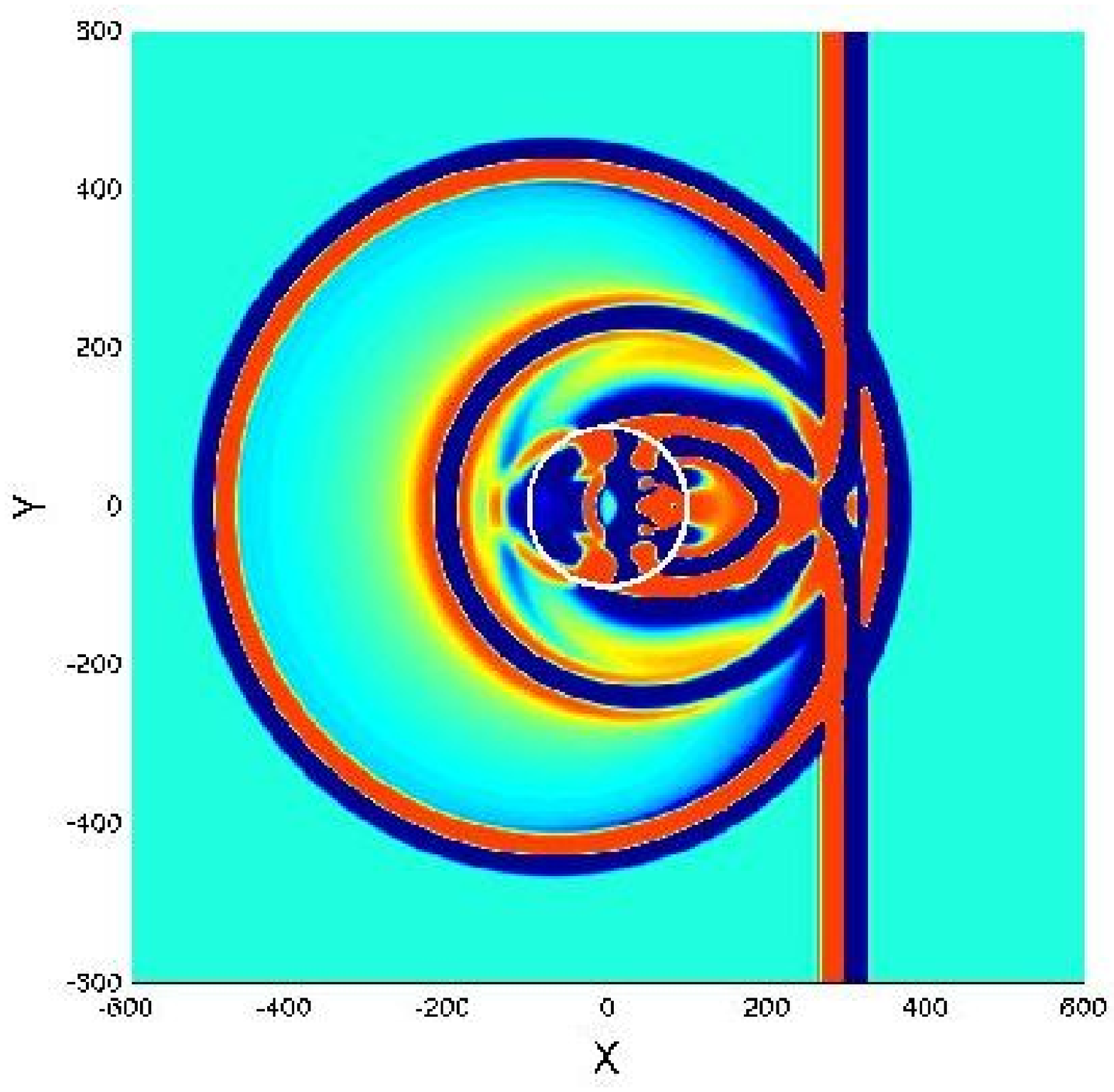}&
\includegraphics[scale=0.4]{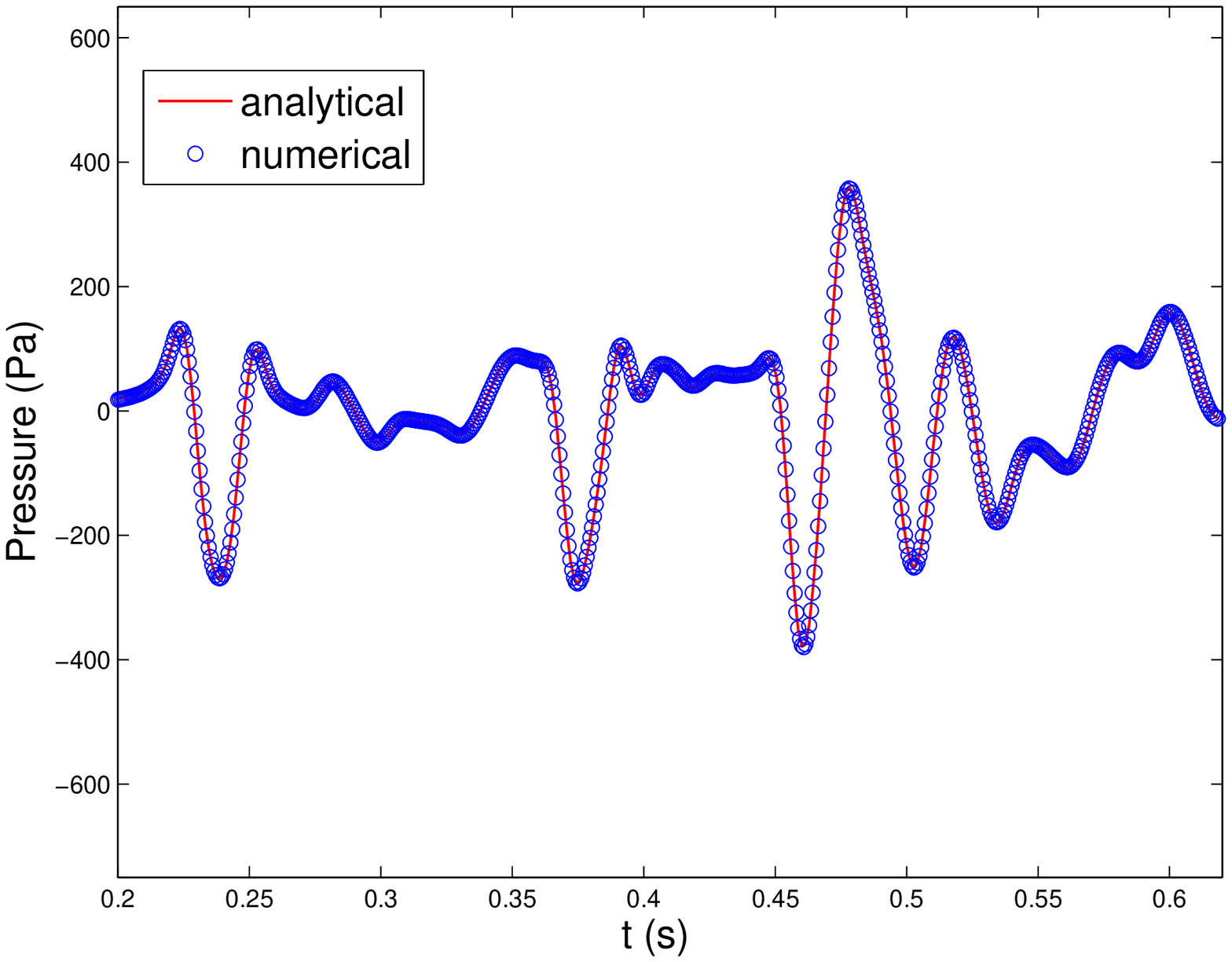}\\
imperfect pores & imperfect pores\\
\includegraphics[scale=0.4]{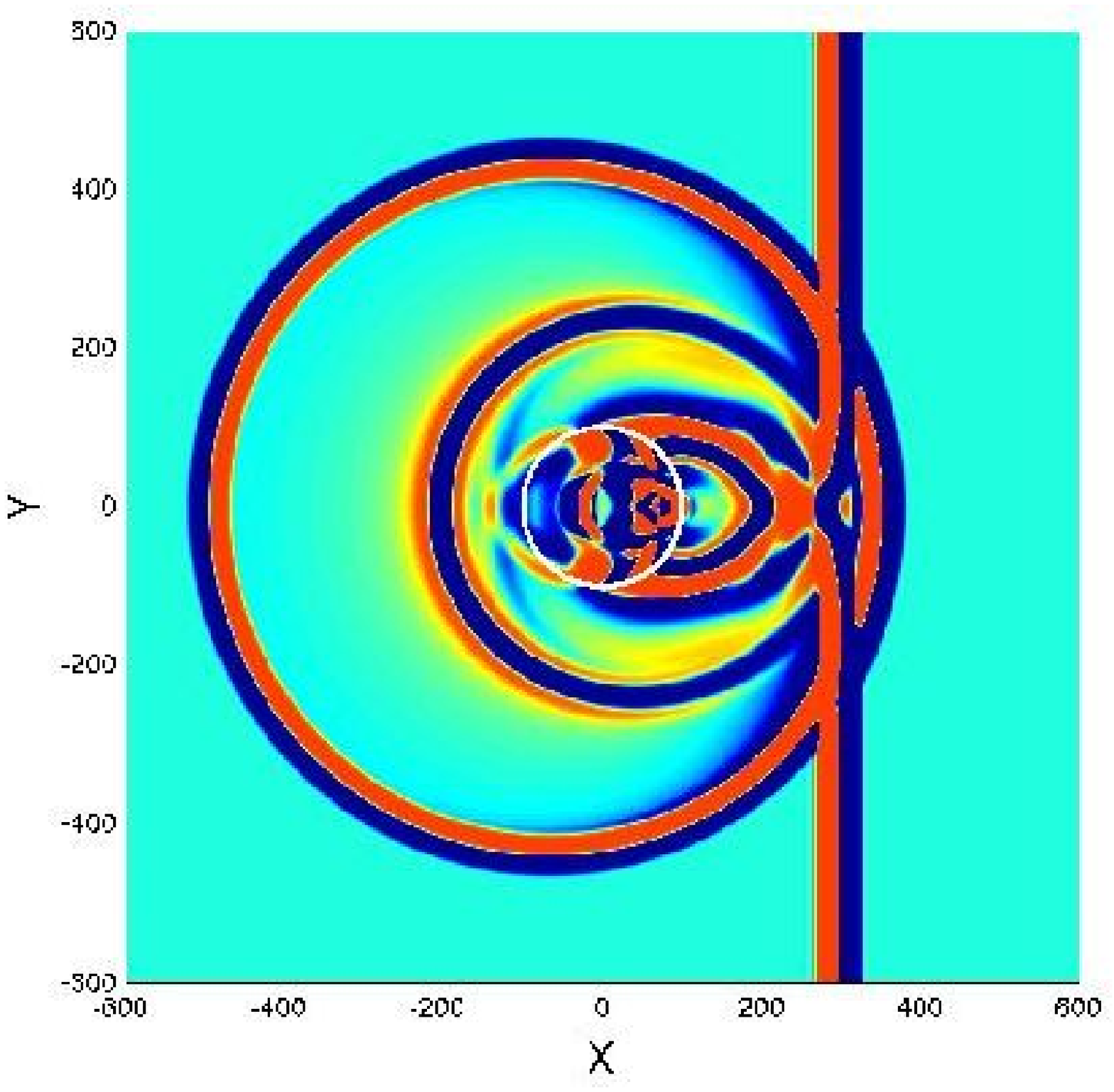}&
\includegraphics[scale=0.4]{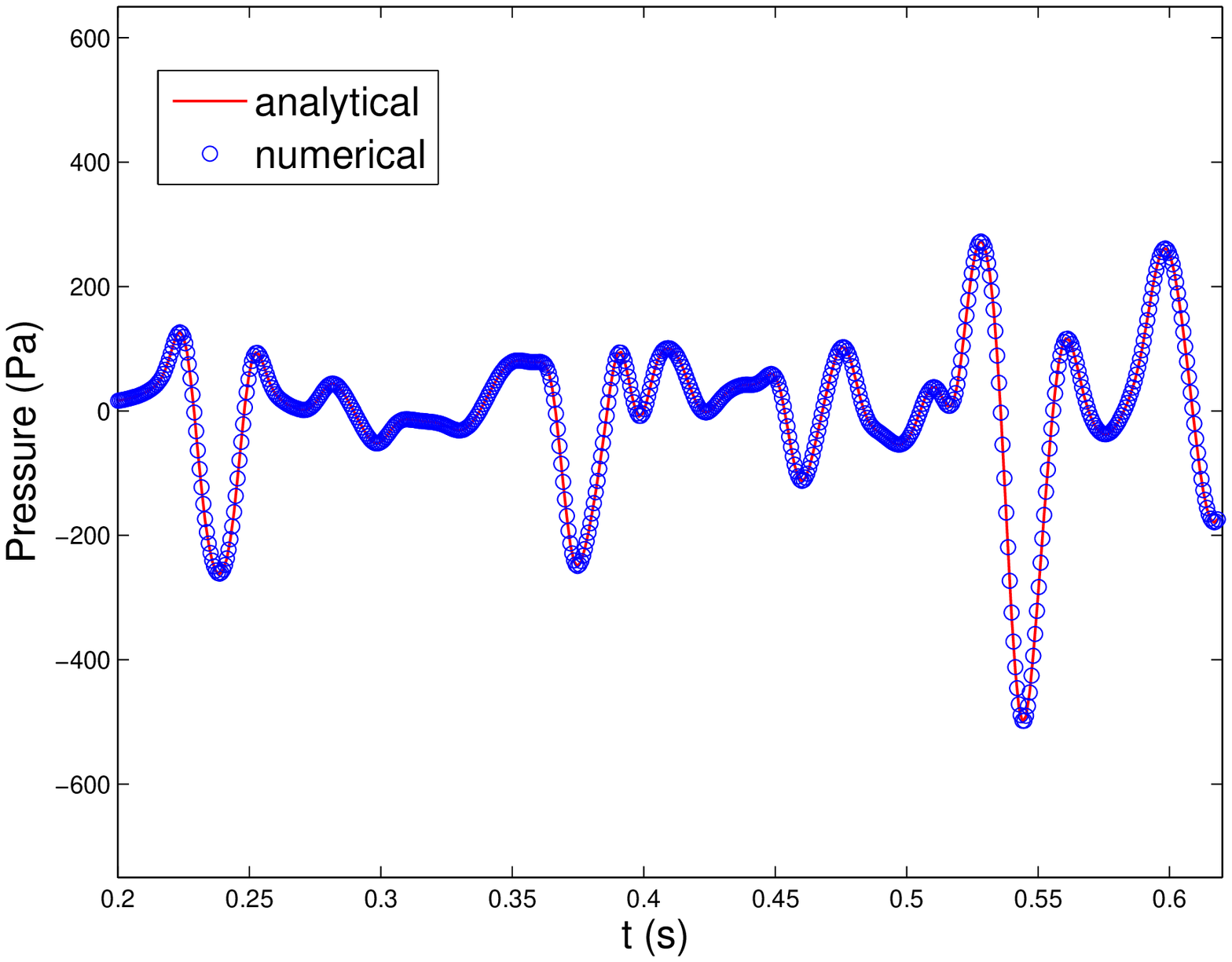}
\end{tabular}
\end{center}
\caption{test 2, inviscid saturating fluid ($\eta=0$ Pa.s). Snapshots of $p$ after 430 time steps (left row), time history of $p$ at the receiver R0 for $t>0.2$ s until $t=0.618$ s corresponding to 900 time steps (right row).}
\label{FigTest2}
\end{figure}

The first experiments concerns the inviscid case ($\eta=0$), where the poroelastic waves are non dispersive. Based on the criterion (\ref{AMR}) and on table \ref{TabParametres}, the refinement factor in the local grid is set to $q=3\approx c_{pf}^\infty\,/\,c_{ps}^\infty$. The numerical values of $p$ after 430 time steps ($t\simeq 0.22$ s) are displayed in the left row of figure \ref{FigTest2}, for the three types of interface conditions. The pressure recorded during 900 time steps at R0 $(x_r=-120,\,y_r=0)$ in $\Omega_0$ is shown in the right row for $t>0.2$ s: the incident wave and the first refracted wave are not represented, in order to focus on the successive reflected/transmitted waves which strongly depend on the interface conditions. This is particularly true for $t>0.4$ s, where shape and amplitude of the recorded pressure completely differ depending on the hydraulic contact.

In the inviscid case, the analytical solutions can be computed very accurately, by a decomposition of Fourier modes on a basis of circular functions. In practice, reference values are obtained by using $N_f=32768$ Fourier modes (with a frequency step $\Delta\,f=0.0063$ Hz) and a truncated basis of 70 Bessel functions. The agreement between numerical and exact values is excellent  in the three cases (figure \ref{FigTest2}, right row). 

\begin{figure}[htbp]
\begin{center}
\begin{tabular}{cc}
(a) & (b)\\
\includegraphics[scale=0.4]{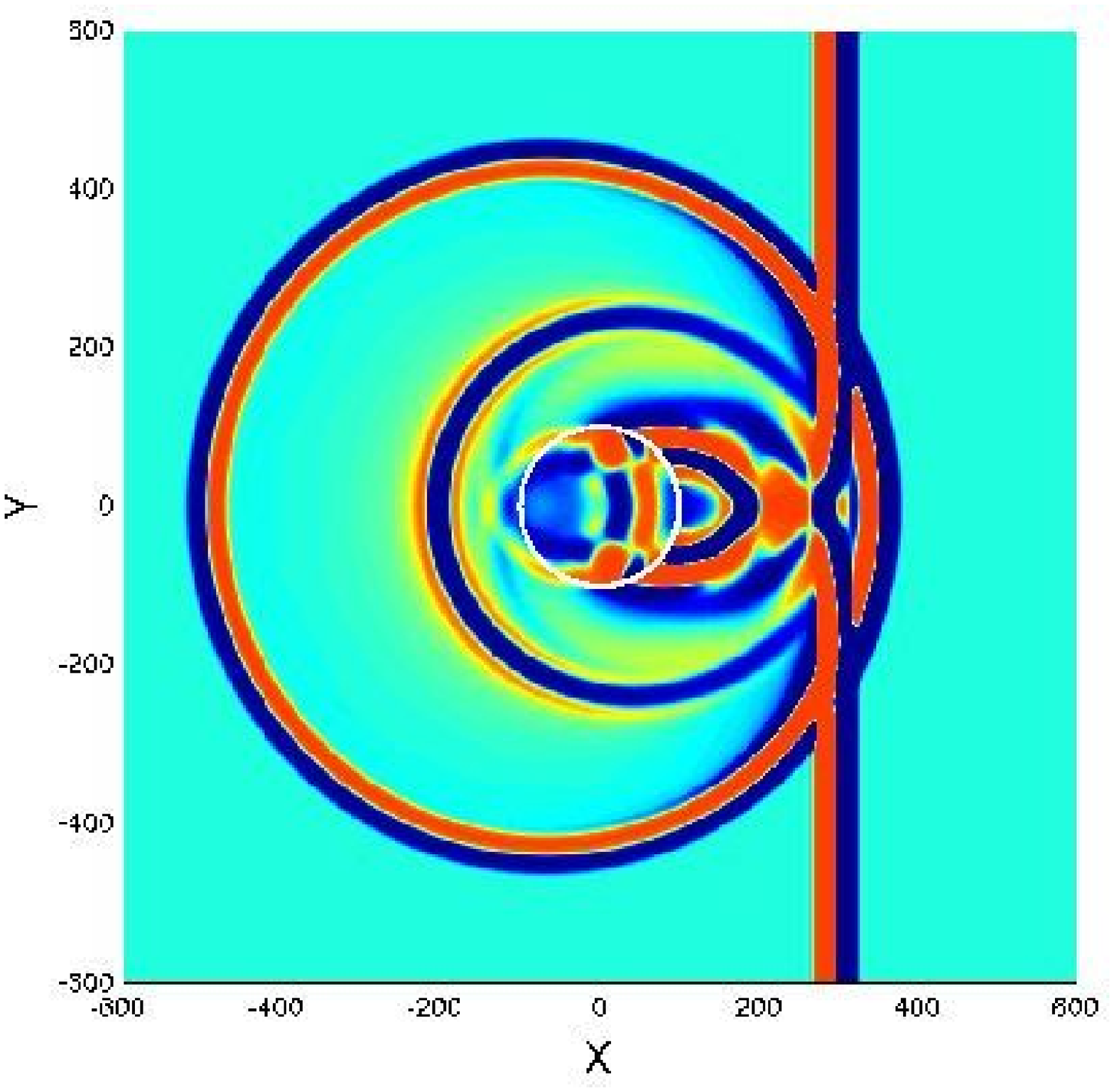}&
\includegraphics[scale=0.4]{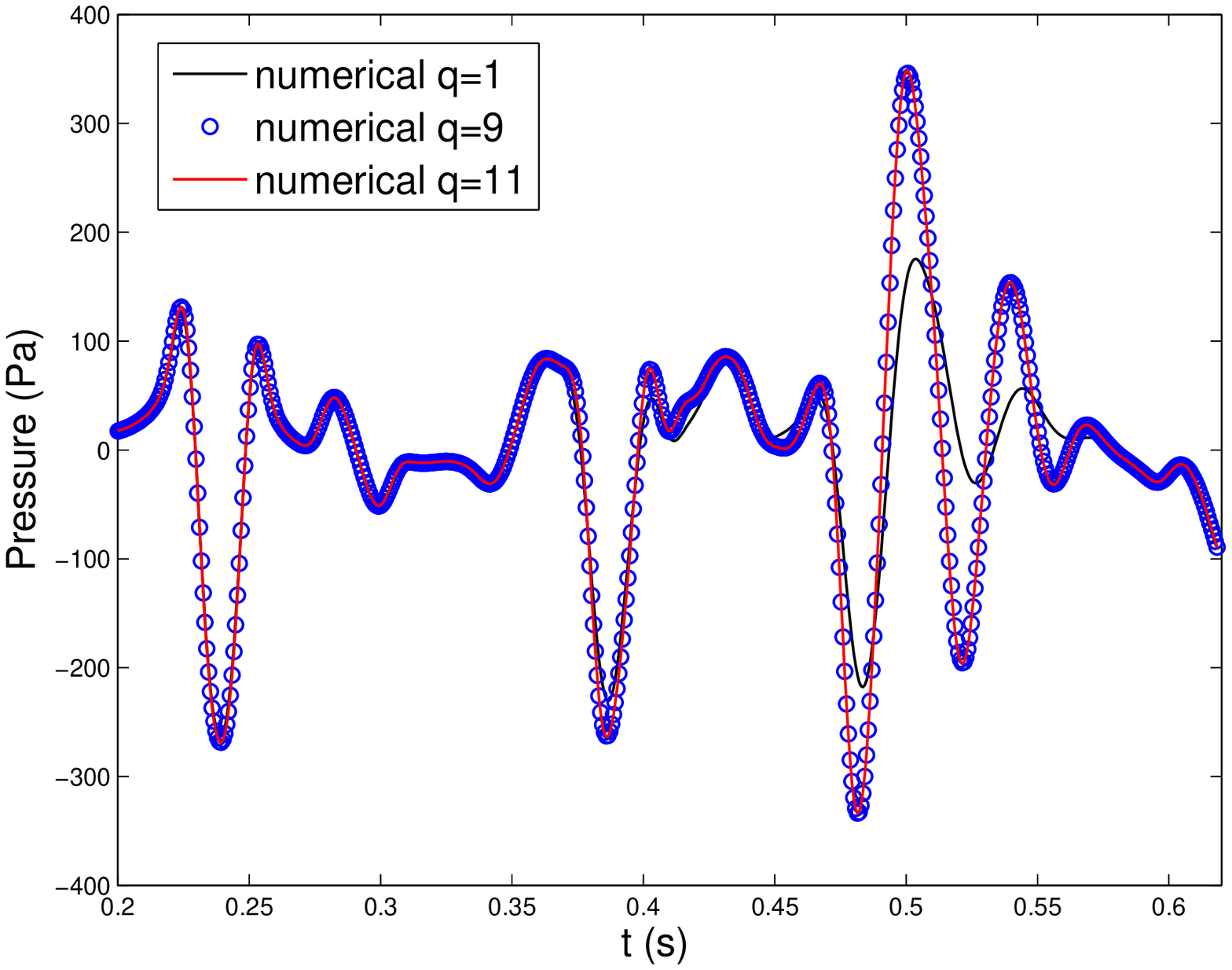}
\end{tabular}
\end{center}
\vspace{-0.8cm}
\caption{test 2, viscous saturating fluid. (a): snapshots of $p$ after 430 iterations for imperfect pore condition. (b): time history of $p$ at the receiver R0 in the acoustic medium, for various refinement factors $q$.}
\label{FigTest2Visq1}
\end{figure}

Similar experiments are also performed with a viscous saturating fluid. The numerical values of $p$ after 430 time steps are displayed in figure \ref{FigTest2Visq1}-a, with imperfect pores. From the criterion (\ref{AMR}) and the phase velocities given in table \ref{TabParametres}, one obtains $q=16$, which is very costly. Nevertheless, the pressure recorded at R0 reveals that $q=9$ suffices to get reference solutions (figure \ref{FigTest2Visq1}-b). 

Compared with the inviscid vase, the viscosity greatly modifies the signal recorded at receiver R0 (figure \ref{FigTest2Visq2}-a). Figure \ref{FigTest2Visq2}-b shows the reflected waves obtained with the three pore conditions. The differences between these signals are smaller than in the inviscid case (right row of figure \ref{FigTest2}).
 
\begin{figure}[htbp]
\begin{center}
\begin{tabular}{cc}
(a) & (b)\\
\includegraphics[scale=0.4]{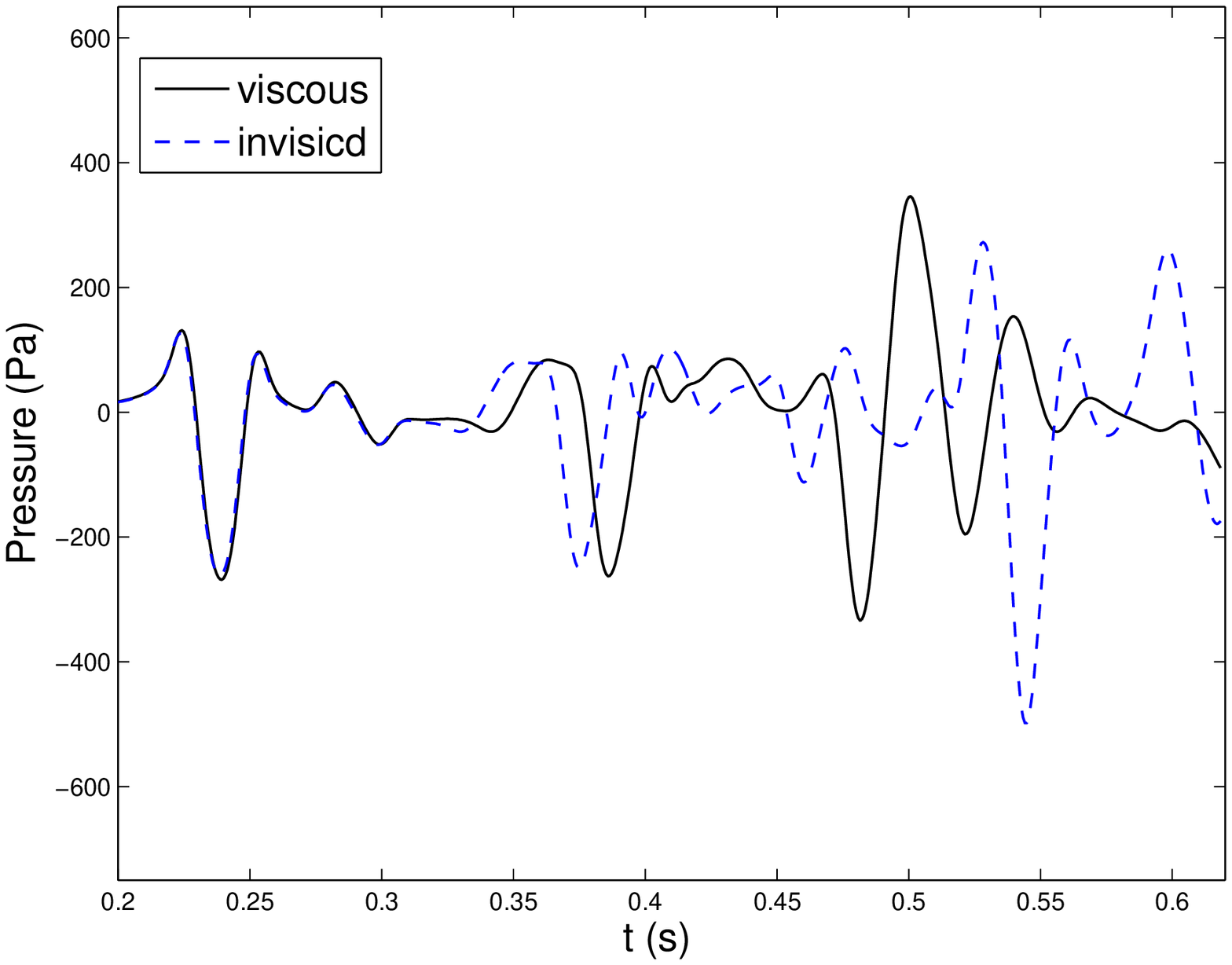}&
\includegraphics[scale=0.4]{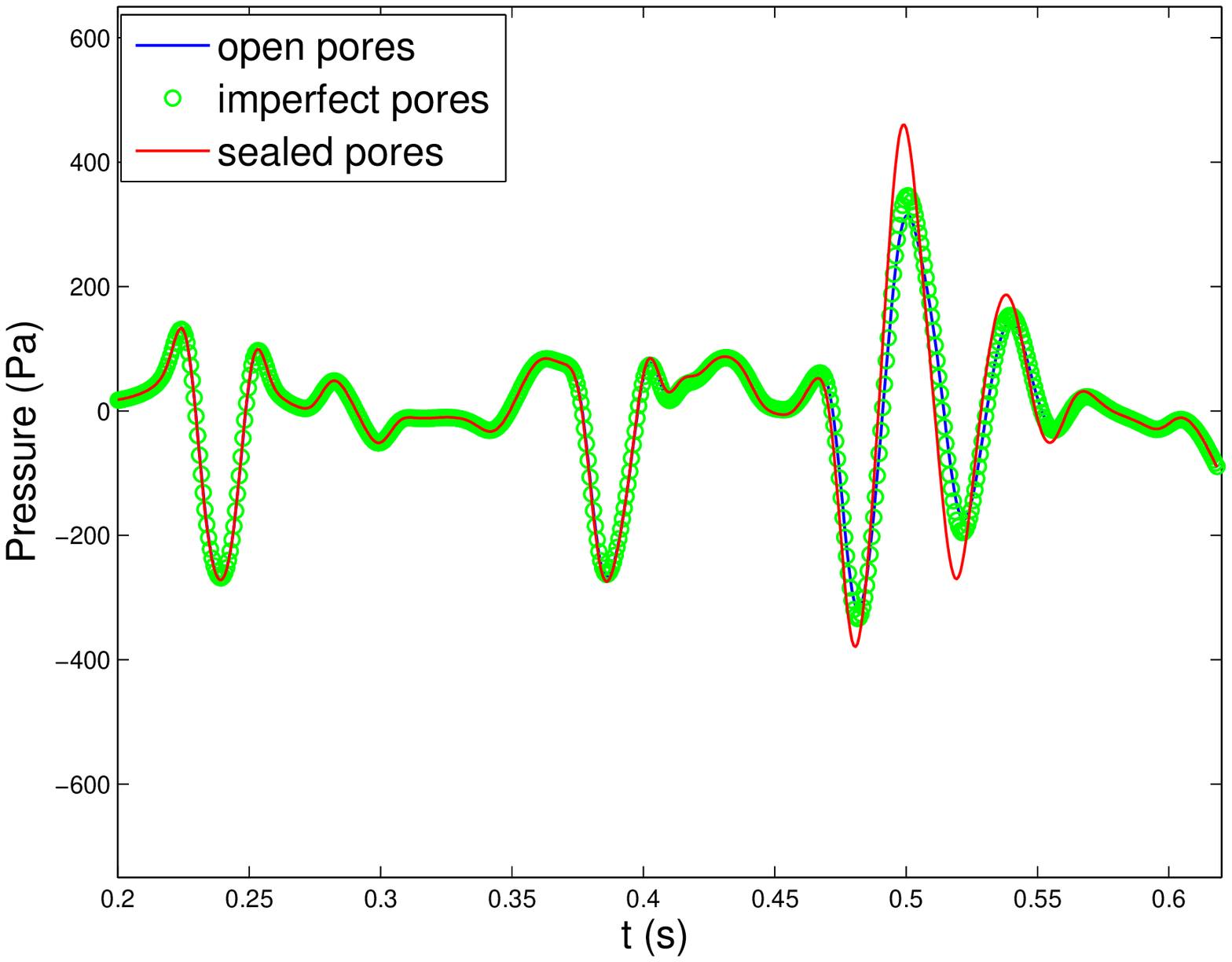}
\end{tabular}
\end{center}
\vspace{-0.8cm}
\caption{test 2, viscous saturating fluid. Time history of $p$. (a): comparison between viscous and inviscid case for imperfect pores; (b): comparison of the three pore conditions.}
\label{FigTest2Visq2}
\end{figure}

%------------------------------------------------------------------------------------------

\subsection{Test 3: a sinusoidal interface}\label{SecResT3}

\begin{figure}[h!]
\begin{center}
\begin{tabular}{cc}
\includegraphics[scale=0.8]{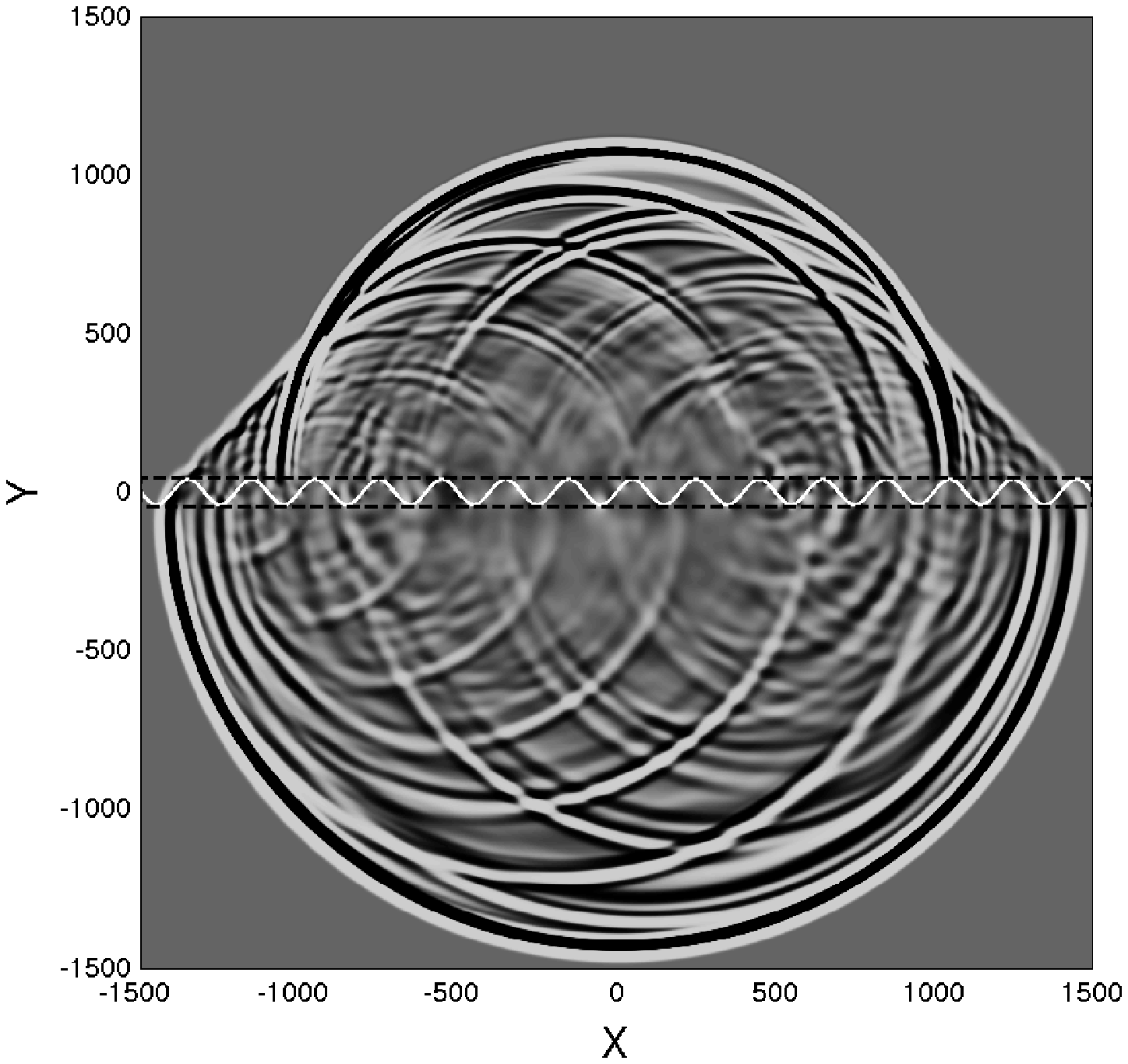}&
\end{tabular}
\end{center}
\vspace{-0.8cm}
\caption{test 3. Snapshot of $p$ after 800 iterations, taking imperfect pore conditions. Horizontal black dotted lines denote the frontiers of the refined grid.}
\label{FigTest3}
\end{figure}

As a third and last test, we consider a sinusoidal interface separating the acoustic medium $\Omega_0$ (top) and the poroelastic medium $\Omega_1$ (bottom). This configuration may model the sea-floor; numerically, it is relevant to test the algorithms with a non-constant curvature of the interface. The domain is $[-1500, 1500]\,\rm{m}^2$, and the interface is given by the relation $y=40 \,\sin\left( \frac{\pi}{100} x \right)$. The viscosity of the saturating fluid is taken into account ($\eta\neq 0$). A source term $f_p$ is put at the point $(x_s=0,\,y_s=20)$ in $\Omega_0$ (\ref{PtSource}). The mesh size is $\Delta\,x=\Delta\,y=2$ m, except in the vicinity of the interface, where a refinement factor of $q=7$ is applied. The refined grid contains about $3.5\,10^6$ nodes and 75544 irregular nodes where the immersed interface method is applied (section \ref{SecNumIIM}).

\begin{figure}[htbp]
\begin{center}
\begin{tabular}{cc}
(a) & (b)\\
\includegraphics[scale=0.4]{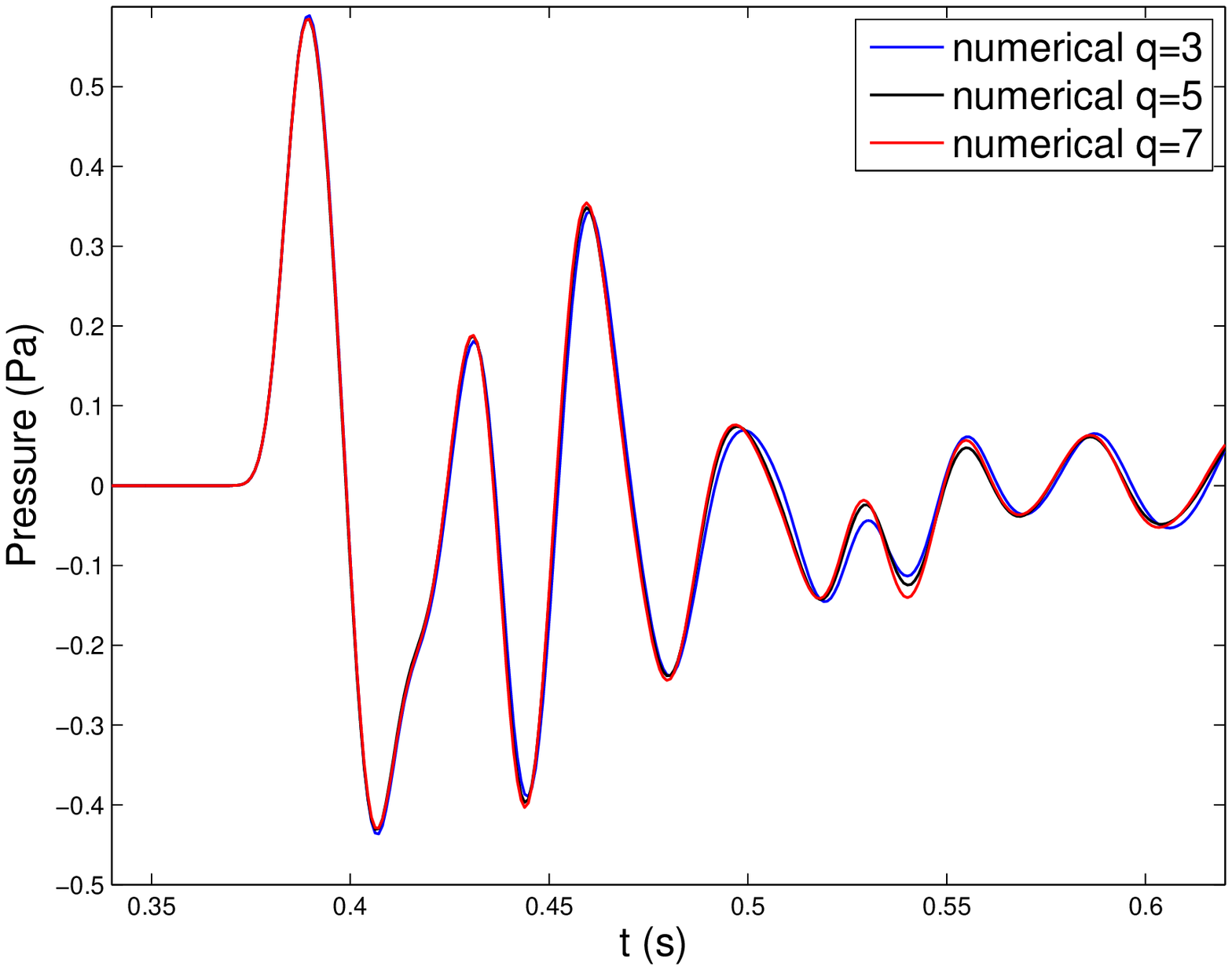}&
\includegraphics[scale=0.4]{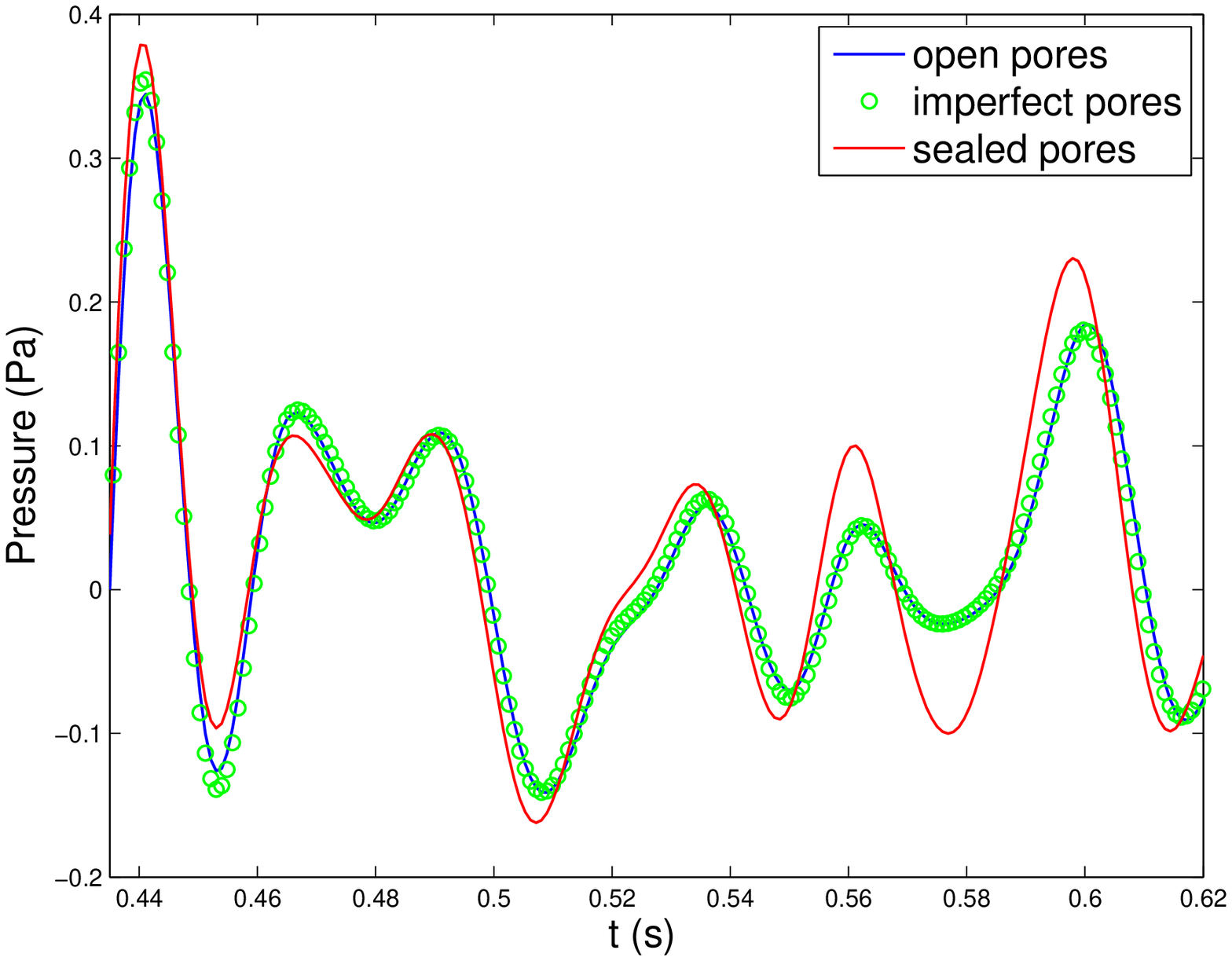}
\end{tabular}
\end{center}
\vspace{-0.8cm}
\caption{test 3. (a): time history of $p$ at the receiver R1 in $\Omega_1$ for different refinement factors; time history of $p$ at the receiver R0 in $\Omega_0$, for $t>0.435$ s (b).}
\label{FigTest3Compar}
\end{figure}

The pressure field after 800 iterations ($t\simeq0.73$ s) is displayed in figure \ref{FigTest3}. From the criterion (\ref{AMR}) and the phase velocities given in table \ref{TabParametres}, one obtains $q=16$, which is very costly. Nevertheless, the pressure recorded close to the interface, at R1 $(x=750,\,y=-45.2)$ in $\Omega_1$, reveals that grid convergence is satisfying when $q=7$ (figure \ref{FigTest3Compar}-a). 

The time history of the pressure recorded at R0 $(x=500,\,y=200)$ in $\Omega_0$ is displayed in figure \ref{FigTest3Compar}-b for the three pore conditions. As in test 2 (figure \ref{FigTest2Visq2}-b), the first reflected waves almost do not depend on the hydraulic contact. Consequently, we focus on the subsequent waves ($t>0.435$ s), where differences are clearly observed depending on the interface conditions. 

%------------------------------------------------------------------------------------------

\section{Conclusion}\label{SecConclu}

We have developed a robust and highly accurate numerical model to simulate wave propagation in fluid / poroelastic media. Our model can incorporate various models of interface conditions, in particular the case of imperfect hydraulic contact: to our knowledge, it is the first time that such simulations are proposed. Arbitrary-shaped interfaces can be handled and accuracy is ensured by a subcell resolution on a Cartesian grid.

Numerical experiments have shown that each part of the algorithm is required to get efficiently reliable results: fourth-order scheme with time splitting, mesh refinement, immersed interface method. The effect of interface conditions on the diffracted waves has been illustrated. When the viscous effects are noticeable, the accurate computation of the slow compressional wave in the poroelastic medium is crucial for the overall accuracy. This diffusive wave propagates along the interface and plays a major role in the balance equations. 

This numerical model enables to investigate many physically-relevant configurations. For instance, comparisons between real experiments and simulations could be used to characterize the hydraulic permeability $\mathcal{K}$ in (\ref{JC}). Simulation of multiple scattering in random or periodic media is another fruitful application. The objective is to estimate numerically the properties of the homogenized effective medium \cite{LUPPE08}. Since a Cartesian grid is used and no meshing of the interfaces is required, our approach is very well suited to the modeling of numerous scatterers (typically, a few hundreds \cite{CHEKROUN09}).

Our numerical model is valid only in the low-frequency range. For frequencies greater than $f_c$ in (\ref{Fc}), the second equation of (\ref{LCBiot}) must be modified to account for the viscous layer dissipation. Fractional derivatives of order 1/2 are then introduced \cite{LU05}. We are currently investigating this topic \cite{BLANC12}. Similar extensions are also required concerning the hydraulic permeability in (\ref{JC}). Indeed, it is known that $\mathcal{K}$ depends not only on the geometrical properties of the medium, but also on the frequency \cite{ROSENBAUM74}.

%------------------------------------------------------------------------------------------

\appendix

\section{Matrices of interface conditions}\label{SecAnnexeMat}

The matrices ${\bf C}_0^0$, ${\bf C}_1^0$ and ${\bf L}_1^0$ introduced in (\ref{JC0}) are detailed. In the case of open pore conditions (\ref{JCOpen}), it follows from (\ref{NT}) 
\begin{equation}
\begin{array}{l}
{\bf C}_0^0(\tau)=
\left(
\begin{array}{ccc}
y^{'} & - x^{'} & 0\\
0     & 0       & -\left(x^{'2}+y^{'2}\right)
\end{array}
\right),\\
[12pt]
{\bf C}_1^0(\tau)=
\left(
\begin{array}{cccccccc}
y^{'} & - x^{'} & y^{'} & - x^{'} & 0      & 0                & 0      & 0 \\
0     & 0       & 0     & 0       & y^{'2} & -2\,x^{'}\,y^{'} & x^{'2} & 0
\end{array}
\right),\\
[12pt]
{\bf L}_1^0(\tau)=
\left(
\begin{array}{cccccccc}
0 & 0 & 0 & 0 & x^{'}\,y^{'} & y^{'2}-x^{'2} & - x^{'}\,y^{'} & 0 \\
0 & 0 & 0 & 0 & y^{'2} & -2\,x^{'}\,y^{'} & x^{'2} & x^{'2}+y^{'2}
\end{array}
\right).
\end{array}
\label{MatOpen}
\end{equation}
In the case of sealed pore conditions (\ref{JCClose}), one gets
\begin{equation}
\begin{array}{l}
{\bf C}_0^0(\tau)=
\left(
\begin{array}{ccc}
y^{'} & - x^{'} & 0\\
0     & 0       & -\left(x^{'2}+y^{'2}\right)
\end{array}
\right),\\
[12pt]
{\bf C}_1^0(\tau)=
\left(
\begin{array}{cccccccc}
y^{'} & - x^{'} & 0 & 0 & 0      & 0                & 0      & 0 \\
0     & 0       & 0 & 0 & y^{'2} & -2\,x^{'}\,y^{'} & x^{'2} & 0
\end{array}
\right),\\
[12pt]
{\bf L}_1^0(\tau)=
\left(
\begin{array}{cccccccc}
0 & 0 & 0     & 0      & x^{'}\,y^{'} & y^{'2}-x^{'2} & - x^{'}\,y^{'} & 0 \\
0 & 0 & y^{'} & -x^{'} & 0            & 0             & 0              & 0 
\end{array}
\right).
\end{array}
\label{MatClose}
\end{equation}
In the case of imperfect pore conditions (\ref{JCPerm}), one gets
\begin{equation}
\begin{array}{l}
{\bf C}_0^0(\tau)=
\left(
\begin{array}{ccc}
y^{'} & - x^{'} & 0\\
0     & 0       & -\left(x^{'2}+y^{'2}\right)
\end{array}
\right),\\
[12pt]
{\bf C}_1^0(\tau)=
\left(
\begin{array}{cccccccc}
y^{'} & - x^{'} & y^{'} & - x^{'} & 0      & 0                & 0      & 0 \\
0     & 0       & 0     & 0       & y^{'2} & -2\,x^{'}\,y^{'} & x^{'2} & 0
\end{array}
\right),\\
[12pt]
{\bf L}_1^0(\tau)=
\left(
\begin{array}{cccccccc}
0 & 0 & 0 & 0 & x^{'}\,y^{'} & y^{'2}-x^{'2} & - x^{'}\,y^{'} & 0 \\
0 & 0 & \displaystyle \frac{\textstyle y^{'}}{\textstyle \mathcal{K}}\sqrt{x^{'2}+y^{'2}} & \displaystyle -\frac{\textstyle x^{'}}{\textstyle \mathcal{K}}\sqrt{x^{'2}+y^{'2}} & y^{'2} & -2\,x^{'}\,y^{'} & x^{'2} & x^{'2}+y^{'2}
\end{array}
\right).
\end{array}
\label{MatPerm}
\end{equation}

%------------------------------------------------------------------------------------------

\section{Four steps to build ${\bf U}_{I,J}^*$ (\ref{UIJ*})}\label{SecDetailsIIM}

{\bf Step 1: high-order interface conditions}. The conditions (\ref{JC0}) are differentiated in terms of $t$ and $\tau$. Time derivatives are replaced by spatial derivatives thanks to the splitted evolution equations (equation (\ref{Strang1})). Consider for instance the zero-th order boundary conditions ${\bf L}_1^0\,{\bf U}_1^0={\bf 0}$; time derivative yields
\begin{equation}
\frac{\textstyle \partial}{\textstyle \partial\,t}\,{\bf L}_1^0\,{\bf U}_1^0={\bf L}_1^0\,\frac{\textstyle \partial}{\textstyle \partial\,t}\,{\bf U}_1^0={\bf L}_1^0\,\left(-{\bf A}_1\,\frac{\textstyle \partial}{\textstyle \partial\,x}\,{\bf U}_1^0-{\bf B}_1\,\frac{\textstyle \partial}{\textstyle \partial\,y}\,{\bf U}_1^0\right)={\bf 0}.
\label{DerT}
\end{equation}
Derivatives in terms of $\tau$ are replaced by spatial derivatives thanks to the chain-rule. For instance, ${\bf L}_1^0\,{\bf U}_1^0={\bf 0}$ yields
\begin{equation}
\frac{\textstyle \partial}{\textstyle \partial\,\tau}\,{\bf L}_1\,{\bf U}_1^0=\left(\frac{\textstyle \partial}{\textstyle \partial\,\tau}\,{\bf L}_1^0 \right)\,{\bf U}_1^0+{\bf L}_1^0\,\left(x^{'}\,\frac{\textstyle \partial}{\textstyle \partial\,x}\,{\bf U}_1^0+y^{'}\,\frac{\textstyle \partial}{\textstyle \partial\,y}\,{\bf U}_1^0\right)={\bf 0}.
\label{DerTau}
\end{equation}
From (\ref{JC0}), (\ref{DerT}) and (\ref{DerTau}), we build a matrix ${\bf L}_1^1$ such that ${\bf L}_1^1\,{\bf U}_1^1=0$, which provides first-order boundary conditions. A similar procedure is applied to the zeroth-th order jump conditions. By iterating this process $r$ times, $r$-th order interface conditions are obtained
\begin{equation}
{\bf C}_1^r\,{\bf U}_1^r={\bf C}_0^r\,{\bf U}_0^r,\qquad
{\bf L}_1^r\,{\bf U}_1^r={\bf 0}.
\label{JCr}
\end{equation}
In our codes, the matrices ${\bf C}_i^r$ and ${\bf L}_1^r$ are computed automatically by developing computer algebra tools.\\

\noindent
{\bf Step 2: high-order compatibility conditions}. Some components of the successive spatial derivatives of ${\bf U}$ are not independent. In the fluid medium $\Omega_0$, the vorticity of acoustic velocity is null ${\bf \nabla}\wedge {\bf v}={\bf 0}$, which yields
\begin{equation}
\frac{\textstyle \partial\,v_2}{\textstyle \partial\,x}=\frac{\textstyle \partial\,v_1}{\textstyle \partial\,y}.
\label{RotV}
\end{equation}
In the porous medium $\Omega_1$, the symmetry of $\sigma$ yields the Beltrami-Michell equation
\begin{equation}
\begin{array}{l}
\displaystyle
\frac{\textstyle \partial^2 \,\sigma_{12}}{\textstyle \partial \,x\,\partial\,y}
=\theta_0\,\frac{\textstyle \partial^2 \,\sigma_{11}}{\textstyle \partial \,x^2}
+\theta_1\,\frac{\textstyle \partial^2 \,\sigma_{22}}{\textstyle \partial \,x^2}
+\theta_2\,\frac{\textstyle \partial^2 \,p}{\textstyle \partial \,x^2}
+\theta_1\,\frac{\textstyle \partial^2 \,\sigma_{11}}{\textstyle \partial \,y^2}
+\theta_0\,\frac{\textstyle \partial^2 \,\sigma_{22}}{\textstyle \partial \,y^2}
+\theta_2\,\frac{\textstyle \partial^2 \,p}{\textstyle \partial \,y^2},\\
[10pt]
\displaystyle
\theta_0=-\frac{\textstyle \lambda_0}{\textstyle 4\,(\lambda_0+\mu)},\quad \theta_1=\frac{\textstyle \lambda_0+2\,\mu}{\textstyle 4\,(\lambda_0+\mu)},\quad \theta_2=\frac{\textstyle \mu \,\beta}{\textstyle 2\,(\lambda_0+\mu)}.
\end{array}
\label{Beltrami}
\end{equation}
Equations (\ref{RotV}) and (\ref{Beltrami}) can be differentiated in terms of $x$ and $y$ as many times as required, assuming a sufficiently smooth solution. The equations so-obtained are satisfied at any point, in particular on both sides of $\Gamma$. They can be used to reduce the number of components of ${\bf U}_i^r$. Reduced vectors ${\bf V}_i$ are defined, such that
\begin{equation}
{\bf U}_i^r = {\bf G}_i^r\,{\bf V}_i^r, 
\label{UGV}
\end{equation}
where ${\bf G}_i^r$ are $n_v \times (n_v-n_b)$ matrices. In $\Omega_0$, $n_b=r\,(r+1)/2$ if $r\geq 1$, 0 else; an algorithm to compute ${\bf G}_0^r$ is given in appendix A of \cite{LOMBARD04}. In $\Omega_1$, $n_b=r\,(r-1)/2$ if $r\geq 2$, $n_b=0$ else; an algorithm to compute ${\bf G}_1^r$ is given in appendix B of \cite{CHIAVASSA11}. The compatibility conditions are crucial to ensure the stability of the immersed interface method. \\ 

\noindent
{\bf Step 3: high-order boundary values}. On the poroelastic $\Omega_1$ side, the $r$-th order boundary conditions in (\ref{JCr}) and the high-order Beltrami equations (\ref{UGV}) yield the underdetermined linear system
\begin{equation}
{\bf L}_1^r\,{\bf G}_1^r\,{\bf V}_1^r={\bf 0}.
\end{equation}
It leads to
\begin{equation}
{\bf V}_1^r={\bf K}_1^r\,{\bf W}_1^r,
\label{VKW}
\end{equation}
where ${\bf K}_1^r$ is the matrix built from the kernel of ${\bf L}_1^r\,{\bf G}_1^r$. The solution ${\bf W}_1^r$ is the minimum set of independent components of the trace of ${\bf U}$ and its spatial derivatives up to the $r$-th order, on the domain $\Omega_1$. Injecting (\ref{VKW}) into the $r$-th order jump conditions (\ref{JCr}) gives
\begin{equation}
{\bf S}_1^r\,{\bf W}_1^r={\bf S}_0^r\,{\bf V}_0^r ,
\label{CGK}
\end{equation}
where ${\bf S}_1^r={\bf C}_1^r\,{\bf G}_1^r\,{\bf K}_1^r$ and ${\bf S}_0^r={\bf C}_0^r\,{\bf G}_0^r$. The underdetermined system (\ref{CGK}) is solved
\begin{equation}
{\bf W}_1^r=\left(\left({\bf S}_1^r\right)^{-1}\,{\bf S}_0^r\,|\,{\bf R}_{{\bf S}_1^r}\right)
\left(
\begin{array}{c}
\displaystyle
{\bf V}_0^r\\
[8pt]
\displaystyle
{\bf \Lambda}^r
\end{array}
\right),
\label{SVD}
\end{equation}
where $({\bf S}_1^r)^{-1}$ is the least-squares pseudo-inverse of ${\bf S}_1^r$, ${\bf R}_{{\bf S}_1^r}$ is the matrix containing the kernel of ${\bf S}_1^r$, and ${\bf \Lambda}^r$ is a set of Lagrange multipliers. To build $({\bf S}_1^r)^{-1}$ and ${\bf R}_{{\bf S}_1^r}$, a singular value decomposition of ${\bf S}_1^r$ is performed.\\

\noindent
{\bf Step 4: construction of modified values}. The coefficients of 2-D Taylor expansions around $P$  (figure \ref{Patate}) are put in the matrix ${\bf \Pi}_{i,j}^r$: 
\begin{equation}
{\bf \Pi}_{i,j}^r=\left({\bf I},...,
\frac{\textstyle 1}{\textstyle \beta \,!\,(\alpha-\beta)\,!}\,(x_i-x_P)^{\alpha-\beta}(y_j-y_P)^\beta\,{\bf I},...,
\frac{\textstyle (y_j-y_P)^r}{\textstyle r\,!}\,{\bf I}
\right),
\label{Taylor}
\end{equation}
where $\alpha=0,...,\,r$ and $\beta=0,...,\,\alpha$; ${\bf I}$ is the $3\times3$ or $8\times8$ identity matrix, depending on whether $(x_i,\,y_j)$ belongs to the fluid domain $\Omega_0$ or the poroelastic domain $\Omega_1$. By definition, the modified value at $(x_I,\,y_J)$ is
\begin{equation}
{\bf U}_{I,J}^*={\bf \Pi}_{I,J}^r\,{\bf U}_0^r.
\label{SolModa}
\end{equation}
To determine the trace ${\bf U}_0^r$ in (\ref{SolModa}), consider the disc ${\cal D}$ centered at $P$ with radius $d$ (figure \ref{Patate}). At the grid nodes of ${\cal D}\cap\Omega_0$, $r$-th order Taylor expansions at $P$, (\ref{UGV}) and (\ref{VKW}) give
\begin{equation}
\begin{array}{lll}
\overline{\bf U}_{i,j}^{(1)} &=& {\bf \Pi}_{i,j}^r\,{\bf U}_0^r,\\
[4pt]
&=& {\bf \Pi}_{i,j}^r\,{\bf G}_0^r\,{\bf V}_0^r,\\
[4pt]
&=& {\bf \Pi}_{i,j}^r\,{\bf G}_0^r\,\left({\bf 1}\,|\,{\bf 0}\right)
\left(
\begin{array}{c}
{\bf V}_0^r\\
[4pt]
{\bf \Lambda}^r
\end{array}
\right),
\end{array}
\label{Taylor1}
\end{equation}
where Taylor rests have been omitted for the sake of clarity. At the grid nodes of ${\cal D}\cap\Omega_1$, $r$-th order Taylor expansions at $P$, (\ref{UGV}), (\ref{VKW}) and (\ref{SVD}), give 
\begin{equation}
\begin{array}{lll}
\overline{\bf U}_{i,j}^{(1)} &=& {\bf \Pi}_{i,j}^r\,{\bf U}_1^r,\\
[4pt]
&=& {\bf \Pi}_{i,j}^r\,{\bf G}_1^r\,{\bf K}_1^r\,{\bf W}_1^r,\\
[4pt]
&=& {\bf \Pi}_{i,j}^r\,{\bf G}_1^r\,{\bf K}_1^r\,\left(\left({\bf S}_1^r\right)^{-1}\,|\,{\bf R}_{{\bf S}_1^r}\right)
\left(
\begin{array}{c}
{\bf V}_0^r\\
[4pt]
{\bf \Lambda}^r
\end{array}
\right).
\end{array}
\label{Taylor2}
\end{equation}
The equations (\ref{Taylor1}) and (\ref{Taylor2}) are written using an adequate matrix ${\bf M}$
\begin{equation}
\left(
{\bf U}^{(1)}
\right)_{\mathcal D}
={\bf M}
\left(
\begin{array}{c}
{\bf V}_0^r\\
[4pt]
{\bf \Lambda}^r
\end{array}
\right).
\label{Taylor3}
\end{equation}
The least-squares inverse of the matrix ${\bf M}$ is denoted by ${\bf M}^{-1}$. Since the Lagrange multipliers ${\bf \Lambda}^r$ are not involved in (\ref{SolModa}), ${\bf M}^{-1}$ is restricted to $\overline{\bf M}^{-1}$, so that
\begin{equation}
{\bf V}_0^r=\overline{\bf M}^{-1}
\,\left(
{\bf U}^{(1)}
\right)_{\mathcal D}.
\label{Taylor4}
\end{equation}
The modified value follows from (\ref{UGV}), (\ref{VKW}), (\ref{SolModa}) and (\ref{Taylor4}), recovering (\ref{UIJ*}):
\begin{equation}
\begin{array}{lll}
\displaystyle
{\bf U}_{I,J}^* &=& \displaystyle {\bf \Pi}_{I,J}^r\,{\bf G}_0^r\,{\bf K}_0^r\,\overline{\bf M}^{-1}\,\left({\bf U}^{(1)}\right)_{\mathcal D},\\
[12pt]
&=& \displaystyle {\cal M}\,\left({\bf U}^{(1)}\right)_{\mathcal D}.
\end{array}
\label{Esim}
\end{equation}

%------------------------------------------------------------------------------------------

\section{Normalization of the variables}\label{SecAnnexeNorme}

As mentioned in section \ref{SecNumIIM}, normalized parameters and unknowns are used in our computer codes. These quantities are denoted by tildes in the following. Given a real ${\cal N}$, we define the normalized time
\begin{equation}
\tilde{t}= {\cal N}\,t,
\label{NormeT}
\end{equation}
the normalized variables
\begin{equation}
\tilde{{\bf v}}={\cal N}\,{\bf v},\quad \tilde{{\bf v}}_s={\cal N}\,{\bf v}_s,\quad \tilde{{\bf w}}={\cal N}\,{\bf w},\quad \tilde{\sigma}=\frac{\textstyle \sigma}{\textstyle {\cal N}},\quad \tilde{p}=\frac{\textstyle p}{\textstyle {\cal N}},
\label{NormeVar}
\end{equation}
and the normalized physical parameters
\begin{equation}
\begin{array}{l}
\displaystyle
\tilde{\rho_f}=\frac{\textstyle \rho_f}{\textstyle {\cal N}},\quad \tilde{\rho_s}=\frac{\textstyle \rho_s}{\textstyle {\cal N}},\quad \tilde{\rho_w}=\frac{\textstyle \rho_w}{\textstyle {\cal N}},\\
[10pt]
\displaystyle
\tilde{\lambda_f}=\frac{\textstyle \lambda_f}{\textstyle {\cal N}^3},\quad \tilde{\mu}=\frac{\textstyle \mu}{\textstyle {\cal N}^3},\quad \tilde{m}=\frac{\textstyle m}{\textstyle {\cal N}^3},\\
[10pt]
\displaystyle
\tilde{\eta}=\frac{\textstyle \eta}{\textstyle {\cal N}},\quad \tilde{\kappa}={\cal N}\,\kappa,\quad\tilde{\mathcal{K}}={\cal N}^2\,\mathcal{K}.
\end{array}
\label{NormeParam}
\end{equation}
The value of the normalization parameter is set to ${\cal N}=1000$ in numerical experiments.

%------------------------------------------------------------------------------------------

\end{document}